\renewcommand{\qed}{\nobreak \ifvmode \relax \else
      \ifdim\lastskip<1.5em \hskip-\lastskip
      \hskip1.5em plus0em minus0.5em \fi \nobreak
      \vrule height0.75em width0.5em depth0.25em\fi}
\newcommand{\ssection}[1]{\medskip\noindent\textbf{#1}\xspace}
\renewenvironment{proof}[1][Proof]{\ssection{#1.}}{\hfill$\qed$}
\newtheorem{theorem}{Theorem}
\newtheorem{lemma}{Lemma}[section]
\newtheorem{claim}[lemma]{Claim}
\newtheorem{definition}{Definition}
\newtheorem*{claim*}{Claim}
\newcommand{\IZ}{\ensuremath{\mathbb{Z}}}
\newcommand{\floor}[1]{{\left\lfloor{#1}\right\rfloor}}
\newcommand{\set}[1]{\ensuremath{\{ #1 \}}}
\newcommand{\poly}{\mbox{\rm poly}}
\newcommand{\eps}{\varepsilon}
\newcommand{\REM}[1]{}
\newcommand{\union}{\ensuremath{\cup}}
\newcommand{\card}[1]{\left\vert{#1}\right\vert}
\newcommand{\rank}{\ensuremath{\textnormal{rank}}}
\newcommand{\lovasz}{Lov{\'a}sz}
\newenvironment{tbox}{\begin{tcolorbox}[
		enlarge top by=5pt,
		enlarge bottom by=5pt,
		 boxsep=0pt,
                  left=4pt,
                  right=4pt,
                  top=10pt,
                  arc=0pt,
                  boxrule=1pt,toprule=1pt,
                  colback=white
                  ]
	}
{\end{tcolorbox}}
\newcommand{\textbox}[2]{
{
\begin{tbox}
\textbf{#1} 
{#2}
\end{tbox}
}
}
\renewcommand{\bar}[1]{\overline{#1}}
\renewcommand{\vec}[1]{\mathbf{#1}}
\newcommand{\bv}{\vec{v}}
\newcommand{\bM}{\vec{M}}
\newcommand{\bA}{\vec{A}}
\newcommand{\CP}{\text{P}}
\newcommand{\mstalg}{\text{\sc MST-algorithm}}
\newcommand{\TT}{\ensuremath{T}}
\newcommand{\Qall}{\ensuremath{Q_{\forall}}}
\newcommand{\dout}[1]{d^{+}(#1)}
\newcommand{\din}[1]{d^{-}(#1)}
\newcommand{\stext}{\text{$s$-$t$ counterpart}}
\newcommand{\FG}{\mathcal{G}}
\renewcommand{\SS}{\mathcal{S}}
\DeclareMathOperator*{\Prob}{\ensuremath{\textnormal{Pr}}}
\renewcommand{\Pr}{\Prob}
\newcommand{\PR}[1]{\ensuremath{\Pr\left(#1\right)}\xspace}
\newcommand{\paren}[1]{\ensuremath{\left(#1\right)}\xspace}
\newcommand{\ourinfo}[1]{Department of Computer and Information Science, University of Pennsylvania. \newline Email: \texttt{#1}. 
Supported in part by National Science Foundation grants CCF-1116961, CCF-1552909, and IIS-1447470.}
\title{Dynamic Sketching for Graph Optimization Problems with Applications to Cut-Preserving Sketches} 
\author{Sepehr Assadi\thanks{\ourinfo{\{sassadi,sanjeev,yangli2\}@cis.upenn.edu}}\and 
Sanjeev Khanna\footnotemark[1] \and 
Yang Li\footnotemark[1] \and 
Val Tannen\thanks{Department of Computer and Information Science, University of Pennsylvania. Email: \texttt{val@cis.upenn.edu}. Supported in part by National Science Foundations grants IIS 1217798 and 1302212.}
}
\date{}
\begin{document}

\maketitle

\thispagestyle{empty}
\begin{abstract}
In this paper, we introduce a new model for sublinear algorithms called \emph{dynamic sketching}.  In this model, the underlying data is partitioned into a large \emph{static} part and a small \emph{dynamic} part and the goal is to 
compute a summary of the static part (i.e, a \emph{sketch}) such that given any \emph{update} for the dynamic part, one can combine it with the sketch to compute a given function. We say that a sketch is {\em compact} if its size is 
bounded by a polynomial function of the length of the dynamic data, (essentially) independent of the size of the static part.

A graph optimization problem $P$ in this model is defined as follows. The input is a graph $G(V,E)$ and a set $T \subseteq V$ of $k$ terminals; the edges between the terminals are the dynamic part and the other edges in $G$ are 
the static part. The goal is to summarize the graph $G$ into a compact sketch (of size $\poly(k)$) such that given any set $Q$ of edges between the terminals, one can answer the problem $P$ for the graph obtained by inserting all 
edges in $Q$ to $G$, using only the sketch.

We study the fundamental problem of computing a maximum matching and prove tight bounds on the sketch size. In particular, we show that there exists a (compact) dynamic sketch of size $O(k^2)$ for the matching problem and 
any such sketch has to be of size $\Omega(k^2)$. Our sketch for matchings can be further used to derive compact dynamic sketches for other fundamental graph problems involving cuts and connectivities.
Interestingly, our sketch for matchings can also be used to give an elementary construction of a  \emph{cut-preserving vertex sparsifier} with space $O(kC^2)$ for $k$-terminal graphs, which matches the best known upper bound; 
here $C$ is the total capacity of the edges incident on the terminals. Additionally, we give an improved lower bound (in terms of $C$) of $\Omega(C/\log{C})$ on size of cut-preserving vertex sparsifiers, and establish that 
progress on dynamic sketching  of the $s$-$t$ max-flow problem (either upper bound or lower bound) immediately leads to better bounds for size of cut-preserving vertex sparsifiers.

 \end{abstract}
\clearpage
\setcounter{page}{1}

\section{Introduction}\label{sec:intro-short}

Massive data sets are arising more and more frequently in many application domains. Traditional gold standards
of computational efficiency, namely, linear-time and linear-space, no longer seem sufficient for managing
and analyzing such massive data sets. As a result, a beautiful new area of sublinear algorithms has developed over the past two decades – these are algorithms whose resource requirements are substantially
smaller than the size of the input on which they operate. A rich theory of sublinear algorithms has emerged,
and has brought remarkable new insights into combinatorial structure of well-studied optimization problems (see, for instance, the surveys~\cite{mcgregor2014graph,muthu2005data,RubinfeldS11}, and references therein).

In recent years, graph optimization problems have received a lot of attention in the study of sublinear algorithms in various models, and the streaming model of computation is one of the most popular examples.
In the streaming model, an algorithm is presented with a stream of edge insertions and deletions and is required to give an answer to a pre-specified graph problem at the end of the stream. Unfortunately, for many
fundamental graph problems, no small space streaming algorithm is possible. For instance,~\cite{feigenbaum2004on} showed that determining whether or not there is a path from a specific vertex $s$ to a specified vertex $t$ in a 
directed graph requires $\Omega(n^2)$ space even for streams with only  edge insertions; here $n$ denotes the number of vertices in the input graph.
This immediately implies that computing the length of the $s$-$t$ shortest path, the
value of the minimum cut between $s$ and $t$, or the edge/vertex connectivity between $s$ and $t$ also requires $\Omega(n^2)$ space since the output of these problems is non-zero only when there is a path from $s$ to $t$. The 
same lower bound is also obtained for computing the size of the maximum matching~\cite{feigenbaum2004on}.
In fact, most of recent works for graph problem focus on \emph{approximation algorithms} developed under the \emph{semi-streaming} model
introduced in~\cite{feigenbaum2004on}, where an algorithm is allowed to output an approximate answer while using space linear in $n$. But is there hope left for \emph{exact} sublinear algorithms? More specifically, is there
 a non-trivial model where sublinear algorithms are achievable for outputting exact answers for fundamental graph problems like matchings, connectivities, cuts, etc.?

In this paper, we explore this direction by considering the case where the input graph only undergoes \emph{local changes}, and study how local changes
influence the solutions of several fundamental graph problems. The goal is to exploit the locality of these updates and compress the rest of the graph into a small-size \emph{sketch} that 
is able to answer \emph{queries} regarding a specific problem (e.g. the $s$-$t$ edge connectivity problem) for every possible local changes made to the graph. We introduce a model in this spirit and in the rest of this section, we 
formally define the model, discuss the connection to existing models, and summarize our results.

\subsection{The Dynamic Sketching Model}
We define the \emph{dynamic sketching model}, where algorithms are required to construct data structures (called sketches) that are composable
with \emph{local updates} to the underlying data.
Specifically, for graph problems in the dynamic sketching model, we consider the following setup. 
Given a graph optimization problem $P$, an input graph $G(V,E)$ on $n$ vertices with $k$ vertices
identified as \emph{terminals} $T = \set{q_1,\ldots,q_{k}}$, the goal of $k$-dynamic sketching for $P$ is to construct a sketch $\Gamma$  such that given any
possible subset of the edges between the terminals (a \emph{query}), we can solve the problem $P$ using only the information contained in the sketch $\Gamma$. Formally,

\begin{definition}\label{def:k-dynamic}
Given a graph-theoretic problem $P$, a $k$-dynamic sketching scheme for
$P$ is a pair of algorithms with the following properties.
\begin{enumerate}[(i)]
\item A \textbf{compression algorithm} that given any input graph $G(V,E)$ with a set $T$ of $k$ terminals,
outputs a data structure $\Gamma$ (i.e, a dynamic sketch).
\item An \textbf{extraction algorithm} that given any subset of the edges between
the terminals, i.e, a \textbf{query} $Q$, and the sketch $\Gamma$, outputs
the answer to the problem $P$ for the graph, denoted by $G^{Q}$, obtained by inserting all edges in $Q$ to $G$  (without further access to $G$).
\end{enumerate}
\end{definition}

We allow both compression and extraction algorithms to be randomized and err with some small probability. Furthermore, we say a sketching scheme is \emph{compact} if it constructs dynamic sketches
of size $\poly(k)$, where the size of a sketch is measured by the number of machine words of length $O(\log{n})$.

We should note right away that \emph{not} every graph problem admits a compact dynamic
sketch. For example, one can show that any dynamic sketch for the \emph{maximum clique}
problem or the \emph{minimum vertex cover} problem requires $\min\set{\Omega(n), 2^{\Omega(k)}}$
space (see Section~\ref{sec:np}; see also Section~\ref{sec:boolean} for a problem in $\CP$
where any dynamic sketch requires at least $\min\set{n,2^{k}}$ space).

In this paper, we focus on dynamic sketching for graph problems, but one should note that
the dynamic sketching model is not restricted to the graph problems; any problem/function
can be defined in this model as follows.  Given a function $f$, the input to a
\emph{dynamic sketching scheme} for $f$ is a data set whose elements are partitioned into
a \emph{static part} (data values for these elements remain fixed) and a \emph{dynamic
  part} (data values for these elements are allowed to change at run-time).  The goal is
to compress the data set into a sketch that contains enough information to recover the
value of the function $f$ (approximately or exactly) no matter how the dynamic part of the
data changes.

\subsection{Connection to Existing Models}
\subparagraph{Streaming.} Any single-pass streaming algorithm with space requirement $s$ can be used as a dynamic sketching scheme with a sketch of size $s$: run the streaming algorithm
 on graph $G(V,E)$ for the static data and store the state of the algorithm as the sketch; continue running the algorithm using the stored state when the dynamic data is presented.
 However, note that a streaming algorithm directly gives a \emph{compact} scheme only when the space requirement is logarithmic in $n$, which, as we just
discussed, is not the case for nearly all fundamental graph optimization problems. In the following, we use the $s$-$t$ shortest path problem as an example to elaborate the distinction between the two models.
Our results in Section~\ref{sec:matching}, illustrates a similar distinction for the case of the maximum matching problem.

As we already mentioned, outputting the length of the $s$-$t$ shortest path requires $\Omega(n^2)$ space in the streaming model.  We now give a simple dynamic sketching scheme for the $s$-$t$ shortest problem
with a sketch of size $O(k^2)$. The input to the $s$-$t$ shortest path problem in the $k$-dynamic sketching model is a weighted graph $G$, a set $T$ of terminals, and
two designated vertices $s$ and $t$.  Without loss of generality, we can assume $s$ and $t$ are terminals; otherwise we can add them to the set of terminals and record their edges to the other terminals in $O(k)$ space.
 The compression algorithm creates a graph $H$ with $V(H) =
\TT$, where for any pair of terminals $q_{i}$ and $q_{j}$, a directed
edge from $q_i$ to $q_j$ is added to $H$ with weight equal to the weight
of a shortest path from $q_i$ to $q_j$ in $G$. The size of $H$ is
$O(k^2)$. To obtain the
answer for each query $Q$, the extraction algorithm adds the edges
in $Q$ to $H$, building a small graph $H^{Q}$, and compute the
shortest path from $s$ to $t$ in $H^Q$.  It is easy to see that the weights of the shortest paths between $s$ and $t$ in
$H^{Q}$ and $G^{Q}$ are equal, thus $H$ can be used as a dynamic sketch for the $s$-$t$ shortest path problem.

\subparagraph{Linear sketches.} A very strong notion of sketching for handling arbitrary changes to the
original data is linear sketching, which corresponds to applying a randomized low-dimensional linear transformation to the input data.
This allows for compressing the data into a smaller space while (approximately) preserving some desired property of the input. Moreover,
composability of these sketches (as they are linear transformations) allow them to handle arbitrary changes to the input data.
Linear sketching technique has been successfully applied to various graph problems, mainly
involving cuts and connectivity~\cite{ahn2012graphS,ahn2012graphP,kapralov2014single}
(see also~\cite{mcgregor2014graph} for a survey of such results in
dynamic graph streams).  However, these results use space that is
prohibitively large for dynamic sketching (a linear dependence on $n$),
and typically only yield approximation answers.

\subparagraph{Kernelization.} Dynamic sketching shares some similarity to \emph{Kernelization} developed in parametrized
complexity~\cite{kratsch2012representative,kratsch2012compression,fortnow2008infeasibility} in the following two aspects. Firstly, the number of terminals $k$ in dynamic sketching may be viewed as a parameter. 
However, the main difference here is that for dynamic sketching, $k$ is a parameter of the \emph{model}, while for kernelization, the parameter is usually the size of the solution, which is the property of the input rather than the 
model. Secondly, since a kernel for an instance of a problem is defined to be an equivalent instance of the same problem with size bounded by a function
of a fixed parameter of the problem, both dynamic sketching and kernelization are in the spirit of compression. However, the techniques developed in kernelization do not directly carry over to dynamic sketching
for the following two reasons. Firstly, kernelization typically focuses only on static data and secondly, the space target in kernelization (which is different compare to dynamic sketching) is normally polynomial
in the parameter (usually the size of the solution to the problem) which could be $\Omega(n)$ in the dynamic sketching model. Finally,
it is worth mentioning that there are problems (e.g. \emph{minimum vertex cover}) that admit polynomial size kernels, while it can be shown  that the dynamic sketching for these problem
require sketches of size $2^{\Omega(k)}$ (see Section~\ref{sec:np}).

\subparagraph{Provisioning.} We should note that dynamic sketching shares some ancestry with \emph{provisioning},
a technique developed by~\cite{deutch2013caravan} for avoiding repeated expensive computations
in what-if analysis, where the input data is formed by $k$ known overlapping subsets
of some universe, and the goal is to compress these subsets so as to answer a specific database query
when only some of those subsets are presented at run-time. Note that a main distinction between the two model is that in provisioning the dynamic input is neither small nor local.

\subsection{Our Results}

\subparagraph{Maximum matching.} The main focus  of this paper is on the maximum matching problem~\cite{lovasz2009matching} in the dynamic sketching model, and its applications to various others problems. We give a dynamic
sketching scheme with a sketch of size $O(k^2)$, using a technique based on
an algebraic formulation of the matchings introduced by Tutte~\cite{tutte1947factorization}. At a high level, we store a sketch that computes the rank of the \emph{Tutte matrix} (see
Definition~\ref{def:tutte}) of the underlying graph. Since the queries only affect $O(k^2)$ entries of the Tutte matrix, we can compress this matrix using algebraic operations into a few small matrices of
dimensions $k \times k$. Storing the small matrices as the sketch and modifying the related entries when a query is presented allows us to compute the rank of the original Tutte matrix, and hence the maximum matching size.
Furthermore, we prove that our sketching scheme is optimal in terms of its space requirement (up to a logarithmic factor). In particular, we show that any dynamic sketching scheme for the matching problem
has to store a sketch of size $\Omega(k^2)$ bits. We emphasize that the lower bound is information-theoretic; it holds even if the compression and extraction algorithms are computationally unbounded.

\subparagraph{Cut-preserving sketches.} Interestingly, we discovered that our scheme for matchings can be used to design a \emph{cut-preserving sketch}, which is the information-theoretic version of a \emph{cut-preserving vertex
sparsifier}~\cite{hagerup1998characterizing,moitra2009approximation,leighton2010extensions}.
Given a capacitated graph $G$ (assume all capacities are integers) with a set $T$ of $k$ terminals, a cut-preserving vertex sparsifier (or a sparsifier for short) of $G$ is a graph $H$ with $T \subseteq V(H)$ ($V(H)$ denotes the set of vertices of $H$) such that for any bipartition $S$ and $T\setminus S$ of terminals, the value of the
minimum cut between $S$ and $T\setminus S$ in $G$ is preserved in $H$. A vertex sparsifier where the stored data is not restricted to be a graph is called a cut-preserving sketch.

In recent years, cut-preserving vertex sparsifiers have been extensively studied
(see, for example,~\cite{charikar2010vertex,englert2010vertex,chuzhoy2012vertex,andoni2014towards}).
For instance, exact sparsifiers with $2^{2^{k}}$
vertices are shown by~\cite{hagerup1998characterizing,khan2014mimicking}, and
sparsifiers with $O(C^3)$ vertices are shown by~\cite{kratsch2012representative}, where $C$ is the
total capacity of the edges incident on the terminals. Additionally, the size of any exact sparsifier is shown to be
$2^{\Omega(k)}$~\cite{krauthgamer2013mimicking,khan2014mimicking}.
Cut-preserving sketches are also studied in the literature~\cite{andoni2014towards,krauthgamer2013mimicking,kratsch2012compression},
where the best construction is known to be of size $O(kC^{2})$
by~\cite{kratsch2012compression}. Moreover, the $2^{\Omega(k)}$ lower bound of~\cite{krauthgamer2013mimicking} is also shown to hold for the cut-preserving sketches.

We show that our dynamic sketching scheme for matchings can be used to obtain an elementary construction of a cut-preserving sketch of size $O(kC^2)$ that matches the
best known upper bound of~\cite{kratsch2012compression}. \cite{kratsch2012compression} showed that given a graph $G$ and a set of $k$ terminals $T$, a single \emph{gammoid} 
can be used to produce a matroid that encodes all \emph{terminal vertex cuts}. The authors then use the result of~\cite{Marx09} to show how to obtain a matrix representation
of this gammoid with $O(k^2)$ entries of $O(k)$ bits each (see Corollary 3.2 of~\cite{kratsch2012compression}). Using standard techniques, one can use this sketch for 
vertex cuts to obtain an sketch for edge cuts (i.e, a cut-preserving sketch) that requires $O(kC^2)$ space. Our construction, on the other hand, uses 
the connection between matchings and the Tutte matrix followed by a simple reduction from cut-preserving sketches to the maximum matching problem. 
We believe that the simplicity of this construction and its connection to dynamic sketches for the matching problem is
of independent interest and gives further insights into the structure of cut-preserving sketches.  
Moreover, we prove an improved lower bound (in terms of $C$) of $\Omega(C/\log{C})$ bits on the size of
any cut-preserving sketch; prior to our work, the best lower bound in terms of $C$ is $\Omega(C^\eps)$ for some
small constant $\eps > 0$ obtained by~\cite{krauthgamer2013mimicking}.

\subparagraph{$s$-$t$ edge-connectivity and $s$-$t$ maximum flow.} As it turns out, any cut preserving sketch can be (almost directly) used to obtain a dynamic sketching scheme for the $s$-$t$ edge-connectivity problem. However,
using our lower bound for cut-preserving sketches, the resulting sketch size for edge-connectivity would be $\Omega(C/\log{C})$, where $C$ could be as large as $n$ (hence the sketch is not compact). To obtain a compact sketch
for edge-connectivity, we further design a dynamic sketching scheme which directly uses our dynamic sketching scheme for matchings, and obtain compact sketches of size $O(k^4)$. We further establish that cut-preserving 
sketches are, in fact, more related to the $s$-$t$ maximum flow problem, in the sense that progress on either upper bound or lower bound on size of dynamic sketches for the $s$-$t$ maximum flow problem immediately leads to 
better bounds for size of cut-preserving sketches.

\subparagraph{Minimum spanning tree.} Finally, we present an $O(k)$-size dynamic sketch for the minimum spanning tree (MST) problem. 
Our idea for creating a compact dynamic sketch for MST is as follows. First of all, it is easy to see that if we add an edge to a graph, an MST of the resulting graph can be created by adding the edge to an MST of the original 
graph. Hence, it is sufficient to store an MST $H$ of the original graph as a sketch. But this sketch is of size $\Omega(n)$. We show that $H$ can be compressed into a tree 
$H'$ such that all leaf nodes are terminals and there are at most $O(k)$ internal nodes in this tree; moreover, for any query $Q$, the weights of the MSTs in $G^Q$ and $H'^Q$ are equal. Hence, $H'$ can be stored as a dynamic 
sketch.

\subparagraph{Organization.} The rest of the paper is organized as follows.
We first introduce our dynamic sketching scheme for the maximum matching problem in Section~\ref{sec:matching} and prove its optimality in terms of the sketch size. 
Then, in Section~\ref{sec:sparsifier-upper}, we show how to use our
sketching scheme for matching problem to construct a cut-preserving sketch. Next, we provide our improved lower bound on size of cut-preserving sketches in Section~\ref{sec:sparsifier-lower}.
We further establish the connection between cut-preserving sketches and $s$-$t$ edge-connectivity and introduce a compact dynamic
sketch for edge connectivity in Section~\ref{sec:st-conn-matching}. We present the equivalence between $s$-$t$ maximum flow and cut-preserving sketches in Section~\ref{sec:max-flow}. 
Our dynamic sketch for the minimum spanning tree problem is provided in Section~\ref{sec:mst}. We also provide two simple lower bounds in Section~\ref{sec:limit} to demonstrate the limitations of 
the dynamic sketching model. Finally, we conclude the paper with some future directions in Section~\ref{sec:conclusion}. 

\subparagraph{Notation.} We denote by $[n]$ the set $\set{1,2,\ldots,n}$. The bold-face upper-case letters represent  matrices.
A matrix with a `tilde' on top (e.g. $\widetilde{\bM}$) always denotes a symbolic matrix, i.e, a matrix containing formal variables.
For any prime $p$, $\IZ_p$ denotes the field of integers modulo $p$.

For any undirected graph $G$, we use $\nu(G)$ to denote the size of a maximum matching in $G$.
For any directed graph $G(V,E)$, an edge $e = (u,v)$ is directed from $u$ to $v$, where we say $u$ is the
\emph{tail} and $v$ is the \emph{head} of $e$. For any vertex $v \in V$, $\dout{v}$ (resp. $\din{v}$) denotes the number of outgoing
(resp. incoming) edges of $v$. For a capacitated graph, $c^{+}(v)$ (resp. $c^{-}(v)$) denotes the total capacity of the outgoing (resp. incoming) edges of $v$.
We assume all the capacities are integers and can be stored in a single machine word of size $O(\log{n})$.

\newcommand{\symbolic}{\widetilde}
\newcommand{\tvM}{\ensuremath{\symbolic{\bM}}}
\newcommand{\vA}{\symbolic{\mathbf{A}}}
\newcommand{\vB}{\symbolic{\mathbf{B}}}
\newcommand{\vC}{\symbolic{\mathbf{C}}}
\newcommand{\vD}{\symbolic{\mathbf{D}}}
\newcommand{\MA}{{\mathbf{A}}}
\newcommand{\MB}{{\mathbf{B}}}
\newcommand{\MC}{{\mathbf{C}}}
\newcommand{\MD}{{\mathbf{D}}}
\newcommand{\MX}{{\mathbf{X}}}
\newcommand{\MY}{{\mathbf{Y}}}
\newcommand{\MI}{{\mathbf{I}}}
\newcommand{\MO}{{\mathbf{0}}}

\newcommand{\tM}{\ensuremath{\bM}}
\newcommand{\diag}{\ensuremath{\textnormal{diag}}}
\newcommand{\scSize}{\ensuremath{\eta}}
\newcommand{\apply}[2]{{#1}_{\mid #2}}

\newcommand{\membership}{\textnormal{\textsc{Membership}}\xspace}

\section{The Maximum Matching Problem}\label{sec:matching}

In this section, we provide our results for the maximum matching problem. In particular, we show that,  

\begin{theorem}\label{thm:matching}
  For any $0 < \delta < 1$, there exists a randomized $k$-dynamic sketching scheme for the
  maximum matching problem with a sketch of size $O(k^2 \log{(1/\delta)})$, which answers any query correctly with probability at least $1 - \delta$.
 \end{theorem}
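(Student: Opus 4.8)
The plan is to encode the input graph $G$ together with a set of $k$ terminals $T=\{q_1,\dots,q_k\}$ into a few $k\times k$ matrices derived from the Tutte matrix, so that any query $Q\subseteq \binom{T}{2}$ only perturbs a $k\times k$ block and the maximum matching size of $G^Q$ can be recovered as a rank computation. Recall that for an undirected graph $H$ with $n$ vertices, the symbolic Tutte matrix $\tvM(H)$ is the $n\times n$ skew-symmetric matrix with independent formal variables on the edges, and $\rank(\tvM(H)) = 2\,\nu(H)$ (Tutte/Lovász). Working over a large enough field $\IZ_p$ and substituting random values preserves the rank with high probability (Schwartz--Zippel), so it suffices to handle the rank of a matrix with numeric entries. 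First I would order the vertices so that the $k$ terminals come last, writing the (numeric, random) Tutte matrix of $G$ in block form $\tM = \begin{pmatrix}\MA & \MB\\ -\MB^\top & \MD_0\end{pmatrix}$, where $\MA$ is the $(n-k)\times(n-k)$ block on non-terminal vertices, $\MB$ is $(n-k)\times k$, and $\MD_0$ is the $k\times k$ skew-symmetric block on the terminals. A query $Q$ replaces $\MD_0$ by $\MD_0 + \MD_Q$ where $\MD_Q$ is the (randomly weighted) skew-symmetric adjacency matrix of the edges in $Q$ — a $k\times k$ matrix — while leaving $\MA$ and $\MB$ untouched, and $\nu(G^Q) = \tfrac12\rank\!\begin{pmatrix}\MA & \MB\\ -\MB^\top & \MD_0+\MD_Q\end{pmatrix}$.

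The key step is to compress the large blocks $\MA,\MB$ without losing the rank of the full matrix for \emph{every} possible perturbation of the terminal block. With some care one may assume $\MA$ is invertible (a random skew-symmetric matrix of even dimension is nonsingular w.h.p.; one can pad the non-terminal part by a dummy perfect matching or argue via a Schur-complement-friendly reduction). Then by the Schur complement identity,
\[
\rank\begin{pmatrix}\MA & \MB\\ -\MB^\top & \MD_0+\MD_Q\end{pmatrix} = \rank(\MA) + \rank\!\big((\MD_0+\MD_Q) + \MB^\top \MA^{-1}\MB\big) = (n-k) + \rank\!\big(\MD_Q + \MR\big),
\]
where $\MR := \MD_0 + \MB^\top\MA^{-1}\MB$ is a single $k\times k$ matrix. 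Hence the sketch is simply the integer $n-k$ together with the $k\times k$ matrix $\MR$ over $\IZ_p$ (with $p=\poly(n)$, so each entry is $O(\log n)$ bits), for a total of $O(k^2)$ words; the extraction algorithm, given $Q$, forms $\MD_Q$, computes $\rank(\MD_Q+\MR)$, and outputs $\tfrac12\big((n-k)+\rank(\MD_Q+\MR)\big)$. To boost the success probability to $1-\delta$, repeat the construction with $O(\log(1/\delta))$ independent random substitutions and take the maximum reported matching size (rank can only be underestimated by a bad substitution, never overestimated), giving size $O(k^2\log(1/\delta))$.

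The main obstacle I anticipate is the invertibility/well-definedness of the Schur complement: the non-terminal block $\MA$ of a Tutte matrix need not be invertible (e.g. if the non-terminal vertices admit no perfect matching among themselves), and $\MA^{-1}$ must not depend on $Q$. The fix is to preprocess $G$ — for instance, add an auxiliary perfect matching on a padded copy of the non-terminal vertex set, or more cleanly, observe that $\rank(\MA)=2\nu(G-T)$ and restructure so that the "extra" rank contributed by $\MA$ is a fixed known quantity, then take a maximal nonsingular principal submatrix of $\MA$ and absorb the rest; one must check this rearrangement commutes with adding edges inside $T$ (which it does, since those edges never touch the $\MA$ or $\MB$ blocks). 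A secondary technical point is the choice of field size and a union bound over the $O(\log(1/\delta))$ trials and the (implicit, but finitely many relevant) rank events, which is routine via Schwartz--Zippel with $p = \mathrm{poly}(n)$. Once these are in place, correctness is immediate from the rank identity and the Tutte--Lovász theorem, and the size bound is by inspection.
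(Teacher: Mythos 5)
Your high-level strategy — substitute random $\IZ_p$ values into the Tutte matrix, compress the non-terminal block via a Schur complement so that each query only perturbs a $k\times k$ block, output half the rank, and amplify by $O(\log(1/\delta))$ independent repetitions — is essentially the paper's approach (the paper instead takes $p=\Theta(n/\delta)$, giving the same $O(k^2\log(1/\delta))$ bound). However the clean version you state, storing only the single $k\times k$ matrix $\mathbf{R}=\MD_0+\MB^\top\MA^{-1}\MB$ and returning $\tfrac12\bigl((n-k)+\rank(\MD_Q+\mathbf{R})\bigr)$, is correct only when the non-terminal block $\MA$ is invertible, i.e.\ only when $G-T$ has a perfect matching. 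This is a structural requirement, not a low-probability event: whenever $n-k$ is odd, $\MA$ is skew-symmetric of odd dimension and hence singular for \emph{every} substitution, so the scheme as stated is undefined.

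You correctly flag the obstacle, but both of your proposed repairs have gaps. Attaching a pendant perfect matching to the non-terminals does not shift $\nu$ by the fixed amount $n-k$: the pendant edges compete with the original ones at each non-terminal (if $G-T$ is a single edge with $k=0$, the padded graph has $\nu=2$ while $\nu(G)+(n-k)=3$). And the phrase ``take a maximal nonsingular $r\times r$ principal submatrix $\MA_{11}$ of $\MA$ and absorb the rest'' is precisely where the real work hides. After Schur-complementing out $\MA_{11}$, the Schur complement of $\MA_{11}$ in $\MA$ vanishes by maximality, but the corresponding $n-k-r$ residual rows of $\MB$ do \emph{not}; one is left with a bordered matrix
\[
\begin{pmatrix}\MO & \MB_2'\\ \MC_2' & \MD_0+\MD_Q+\mathbf{R}'\end{pmatrix}
\]
whose rank genuinely depends on the tall block $\MB_2'$ (when $G-T$ is a single isolated vertex adjacent to two terminals, that block is the \emph{only} source of rank). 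What saves the construction is that $\MB_2'$ has only $k$ columns and therefore at most $k$ linearly independent rows; the sketch must additionally store a $k\times k$ row basis of $\MB_2'$ and, symmetrically, a $k\times k$ column basis for the transposed block. That is exactly what the paper's Steps~2--4 accomplish: diagonalize the non-terminal block to $\diag(1,\ldots,1,0,\ldots,0)$, zero out the cross blocks attached to its full-rank part, and keep a maximal independent set of the residual rows and columns as the extra $k\times k$ matrices $\MB''$ and $\MC''$. With those two blocks added to your sketch, the argument goes through and the $O(k^2\log(1/\delta))$ bound stands.
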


Furthermore, we prove that the sketch size obtained in Theorem~\ref{thm:matching} is tight (up to an $O(\log{n})$ factor). Formally, 

\begin{theorem}\label{thm:matching-lower}
	For any $k \geq 2$, any $k$-dynamic sketching scheme for the maximum matching problem
	that answers any query correctly with probability at least $2/3$, requires a dynamic sketch of size $\Omega(k^2)$ bits.
\end{theorem}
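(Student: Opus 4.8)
The plan is to derive the $\Omega(k^2)$ bound from the one-way randomized communication complexity of $\textsc{Index}$. Recall that in $\textsc{Index}_N$ Alice holds $z\in\{0,1\}^N$, Bob holds $\ell\in[N]$, Alice sends a single message, and Bob must output $z_\ell$; this requires $\Omega(N)$ bits of communication, even for protocols that err with constant probability and even given shared randomness. I will show that a $k$-dynamic sketching scheme for maximum matching producing sketches of $s$ bits yields such a protocol for $\textsc{Index}_{m^2}$ with $m=\lfloor k/4\rfloor$ and cost $s$, forcing $s=\Omega(m^2)=\Omega(k^2)$. (It suffices to treat $k$ a multiple of $4$; for other $k$ one pads the terminal set with isolated terminals, and the finitely many small cases are trivial.)

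\emph{The hard instances.} Interpret Alice's string as a matrix $M\in\{0,1\}^{m\times m}$. Her static graph $G_M$ has terminal set $T=\{a_i\}_{i\in[m]}\cup\{b_j\}_{j\in[m]}\cup\{r_i\}_{i\in[m]}\cup\{c_j\}_{j\in[m]}$ (so $k=4m$). For each cell $(i,j)$ she adds a private gadget on two fresh non-terminals: if $M_{ij}=1$, the length-three path $a_i - x_{ij} - y_{ij} - b_j$; if $M_{ij}=0$, an edge $x'_{ij}-y'_{ij}$ touching no terminal. The sketch $\Gamma$ of $G_M$ is Alice's message. Bob, holding $(i^*,j^*)$, forms the query
\[
Q_{i^*,j^*}=\{\,\{a_i,r_i\}: i\ne i^*\,\}\ \cup\ \{\,\{b_j,c_j\}: j\ne j^*\,\},
\]
runs the extraction algorithm on $(\Gamma,Q_{i^*,j^*})$ to get $\nu\bigl(G_M^{Q_{i^*,j^*}}\bigr)$, and outputs $\nu\bigl(G_M^{Q_{i^*,j^*}}\bigr)-m^2-2(m-1)$.

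\emph{The key identity.} Correctness reduces to showing $\nu\bigl(G_M^{Q_{i^*,j^*}}\bigr)=m^2+2(m-1)+M_{i^*,j^*}$. For the lower bound on $\nu$, take the middle edge of every gadget ($m^2$ edges), all $2(m-1)$ query edges, and---when $M_{i^*j^*}=1$---re-route the gadget $(i^*,j^*)$ as $\{a_{i^*},x_{i^*j^*}\},\{y_{i^*j^*},b_{j^*}\}$, a net gain of one that is available precisely because $a_{i^*}$ and $b_{j^*}$ receive no query edge. For the matching upper bound, observe that $G_M^{Q_{i^*,j^*}}$ is bipartite (two-colour $\{a_i\},\{y_{ij}\},\{c_j\},\{x'_{ij}\}$ against $\{x_{ij}\},\{b_j\},\{r_i\},\{y'_{ij}\}$), so by K\"onig's theorem it is enough to exhibit a vertex cover of the same size: take $\{a_i:i\ne i^*\}\cup\{b_j:j\ne j^*\}$ plus one interior vertex of each gadget, with the single exception that when $M_{i^*j^*}=1$ the path gadget $(i^*,j^*)$ needs two interior vertices because both of its endpoints are uncovered; a short case analysis verifies that every edge is covered (and that $r_{i^*},c_{j^*}$ are isolated).

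\emph{Main obstacle.} I expect essentially all of the work to be in the matching-size identity, and specifically in the case $M_{i^*j^*}=0$: one must rule out that some chain of re-routings along the length-three gadget paths in row $i^*$ or column $j^*$ pushes $\nu$ above $m^2+2(m-1)$. The cleanest way to handle this is the vertex-cover/K\"onig argument above (the path gadgets together with the dummy query edges keep the graph bipartite), which pins down $\nu$ exactly rather than merely bounding it. Once the identity is established the reduction is immediate: if the scheme errs with probability at most $1/3$, Bob recovers $M_{i^*,j^*}$ with probability at least $2/3$, so $s\ge\Omega(m^2)=\Omega(k^2)$; since only the input/output behaviour of the compression and extraction algorithms is used, the bound is information-theoretic and holds even against computationally unbounded schemes.
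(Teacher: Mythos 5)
Your proposal is correct and follows the same overall strategy as the paper: a reduction from the one-way communication problem $\textsc{Index}$ (the paper calls it $\textsc{Membership}$) with universe size $N=\Theta(k^2)$, so the $\Omega(N)$ one-way lower bound gives $\Omega(k^2)$ bits of sketch. The hard instances differ in detail, though. The paper uses a four-layer graph $V_1\cup V_2\cup V_3\cup V_4$ (each of size $\sqrt N$) with perfect matchings between $V_1,V_2$ and between $V_3,V_4$, puts the ``set'' edges between $V_2$ and $V_3$, takes $V_1\cup V_4\cup\{u,w\}$ as terminals, and queries with the two edges $\{u,v^{(1)}_i\}$ and $\{w,v^{(4)}_j\}$; the matching-size identity is then proved directly by noting that only $v^{(2)}_i,v^{(3)}_j$ are exposed and any alternating path from either has length at most two, so there is an augmenting path iff $\{v^{(2)}_i,v^{(3)}_j\}\in E$. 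Your construction instead uses an $m\times m$ array of private gadgets plus ``dummy'' terminals $r_i,c_j$ that the query uses to saturate all $a_i,b_j$ except the queried row and column, and pins the matching size down exactly via K\"onig's theorem. Both are valid; the paper's gadget is smaller (a two-edge query versus a $2(m-1)$-edge query) and its upper-bound argument is a short case analysis of alternating paths, while your K\"onig argument trades that ad hoc reasoning for an explicit min-cut-style certificate, at the cost of needing to check bipartiteness and do a small vertex-cover case analysis. I verified your key identity and the bipartition; the one small thing you left implicit and should spell out in a full write-up is the choice of which interior vertex goes into the cover for each gadget (take $x_{i^*j}$ in row $i^*$, $y_{ij^*}$ in column $j^*$, either in other cells, and both for $(i^*,j^*)$ when $M_{i^*j^*}=1$), but this is exactly the ``short case analysis'' you flagged and it works.
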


Our sketching scheme for the proof of Theorem~\ref{thm:matching} relies on an algebraic formulation
for the matching problem due to Tutte~\cite{tutte1947factorization}.  In the remainder of this section, we present this algebraic formulation, state our sketching scheme for matchings and proves its correctness and then
present our lower bound result.

\ssection{Algebraic formulation for the matching problem.}
The following matrix was first introduced by Tutte~\cite{tutte1947factorization}.
\begin{definition}[Tutte matrix~\cite{tutte1947factorization}]\label{def:tutte}
  Suppose $G(V,E)$ is an undirected graph. The \emph{Tutte matrix} of
  $G$ is the following \emph{symbolic} matrix $\tvM$ of dimension $n \times n$.
  \[
  \tvM_{i,j} =
  \begin{cases}
  x_{i,j} &\mbox{if $(i,j) \in E$ and $i < j$}\\
  -x_{j,i} &\mbox{if $(i,j) \in E$ and $i > j$} \\
  0 &\mbox{otherwise}
  \end{cases}
\]
where the $x_{i,j}$ are distinct formal variables.
\end{definition}

Tutte~\cite{tutte1947factorization} discovered an important connection between perfect matchings of a graph $G$ and the above matrix: the determinant of the Tutte matrix (which is a polynomial of degree $n$ in the $\set{x_{i,j}}$ variables) is identical to zero if and only if $G$ does not have a perfect matching. This relation was later generalized to maximum matchings~\cite{lovasz2009matching}: the rank of the Tutte matrix is  twice the size of the maximum matching. Based on this relation and the Schwartz-Zippel Lemma~\cite{schwartz1980fast}, \lovasz~\cite{lovasz1979determinants} proposed a simple algorithm for computing the size of a maximum matching in a general graph summarized in the following Lemma (see also~\cite{rabin1989maximum} for more details on performing the computations over a finite field).

\begin{lemma}[\!\!\cite{lovasz1979determinants,rabin1989maximum}]\label{lem:rank-lovasz}
   Let $G$ be an undirected graph with $n$ vertices and the maximum matching size of $\nu(G)$. For any prime $p > n$, let $\IZ_p$ be the field of integers modulo $p$. Suppose $\tvM$ is the
   Tutte matrix of $G$ and $\tM$ is the matrix obtained by evaluating each variable in $\tvM$ by a number chosen independently and uniformly at random from $\IZ_p$; then:
   \[
	 \PR{\rank(\tM) = 2\nu(G)} \geq 1-\frac{n}{p}
   \]
   Note that the computation of $\rank(\tM)$ is also done over the field $\IZ_p$.
\end{lemma}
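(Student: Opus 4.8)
\emph{The plan} is to derive the bound from two structural facts about the symbolic Tutte matrix $\tvM$ together with a single application of the Schwartz--Zippel lemma~\cite{schwartz1980fast}. The first fact is the Tutte--\lovasz{} characterization stated above: over the field of rational functions, $\rank(\tvM) = 2\nu(G)$. The second is the elementary observation that substituting field elements for the formal variables can never raise the rank of a matrix, and, by Schwartz--Zippel, a \emph{random} substitution does not lower it except with small probability. Throughout, write $r := 2\nu(G)$ and recall $r \le n$ since a matching has at most $n/2$ edges.

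\emph{Upper bound (deterministic).} Since $\rank(\tvM) = r$ over the field of fractions, every $(r+1)\times(r+1)$ minor of $\tvM$ is the identically-zero polynomial; having integer coefficients, it is also identically zero modulo $p$. Hence for \emph{every} choice of the random values, each such minor of $\tM$ evaluates to $0$, so $\rank(\tM) \le r$ with probability $1$.

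\emph{Lower bound (with high probability).} Because $\rank(\tvM) = r$, some $r\times r$ submatrix of $\tvM$ has determinant $f$ that is a nonzero polynomial in the variables $\{x_{i,j}\}$; a convenient explicit witness is the principal submatrix indexed by the $r$ vertices saturated by a fixed maximum matching, whose determinant equals the square of a Pfaffian, i.e.\ a sum over perfect matchings of the induced subgraph contributing one distinct monomial each with coefficient $\pm 1$ (so $f \not\equiv 0$, and it stays nonzero modulo $p$). Every entry of $\tvM$ is $0$ or $\pm$ a single variable, so $\deg f \le r \le n$. By Schwartz--Zippel, evaluating the variables at independent uniform elements of $\IZ_p$ sends $f$ to $0$ with probability at most $\deg f / p \le n/p$. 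On the complementary event, that $r\times r$ submatrix of $\tM$ is nonsingular over $\IZ_p$, so $\rank(\tM) \ge r$; combined with the deterministic upper bound this gives $\rank(\tM) = r = 2\nu(G)$ with probability at least $1 - n/p$, as claimed. (Rank over $\IZ_p$ is computed by Gaussian elimination, which is well-defined as $\IZ_p$ is a field for prime $p$.)

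\emph{The main obstacle} is the finite-field subtlety, which is exactly why the lemma also cites~\cite{rabin1989maximum}: the identity $\rank(\tvM) = 2\nu(G)$ is usually stated over $\IQ$, and a priori the witnessing minor $f$, though a nonzero integer polynomial, could have every coefficient divisible by $p$ and thus vanish identically over $\IZ_p$. The Pfaffian description above resolves this because the surviving coefficients are $\pm 1$. If one instead wanted a fully self-contained argument, the genuinely nontrivial remaining ingredient would be the inequality $\rank(\tvM) \le 2\nu(G)$ (Tutte's theorem and its extension to maximum matchings), which carries all the combinatorial content; we take it as given, per the discussion preceding the lemma.
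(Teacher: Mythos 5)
The paper states this lemma as a citation to \cite{lovasz1979determinants,rabin1989maximum} and does not include a proof, so there is no in-paper argument to compare against; what you wrote is a correct and essentially standard reconstruction of the argument underlying those references. Your decomposition — the deterministic upper bound via vanishing $(r{+}1)\times(r{+}1)$ minors, the Schwartz--Zippel lower bound on a nonzero $r\times r$ minor, and the Pfaffian argument (coefficients $\pm 1$, then squaring in the integral domain $\IZ_p[x]$) to ensure the witnessing minor survives reduction mod $p$ — is exactly the right way to handle the finite-field subtlety that \cite{rabin1989maximum} addresses, and the paper's own text (``Based on this relation and the Schwartz-Zippel Lemma, Lov{\'a}sz proposed a simple algorithm'') indicates this is the intended route.
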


\subsection{An $O(k^2)$ size upper bound}

In this section, we provide our $k$-dynamic sketching scheme for the maximum matching problem and prove Theorem~\ref{thm:matching}. 

\ssection{Notation.} Suppose the input is an undirected graph $G(V,E)$ with a set $T = \set{q_1,\ldots,q_k}$ of $k$ terminals. 
Let $p$ be any prime of magnitude $\Theta(n/\delta)$; we perform the algebraic computations in the field $\IZ_p$. 
Let $\tvM$ be the Tutte matrix of the graph obtained by adding all edges between the terminals to $G$, where the first $k$ rows
and $k$ columns correspond to the vertices in $T$. We decompose $\tvM$ into four sub matrices $\vA,\vB,\vC$, and $\vD$ as follows:
\[
	\tvM = \begin{bmatrix}
	\vA_{k \times k} & \vB_{k \times (n-k)} \\
	\vC_{(n-k) \times k} & \vD_{(n-k) \times (n-k)}
	\end{bmatrix}
\]

\ssection{Compression algorithm:} The compression algorithm consists of $4$ steps. Each of them performs a simple algebraic manipulation on the Tutte matrix $\tvM$.

\ssection{Step~1.} For each non-zero entry of $\tvM$ that corresponds to an edge in $G$ (i.e., not between the terminals), assign an integer chosen uniformly at random from $\IZ_p$. Denote the resulting matrix by,
\[
\tvM_1 = \begin{bmatrix}
	\vA_{k \times k} & \MB_{k \times (n-k)} \\
	\MC_{(n-k) \times k} & \MD_{(n-k) \times (n-k)}
	\end{bmatrix}
\]
Note that except for $\vA$, all sub-matrices in $\tvM_1$ are \emph{no longer} symbolic.

\ssection{Step~2.} Let $r = \rank(\MD)$. Use \emph{elementary row and column operations} to change $\MD$ into a \emph{diagonal} matrix $\diag(1,\ldots,1,0,\ldots,0)$ with only $r$ non-zero entries.
Note that after this process, matrices $\MB$ and $\MC$ would also change, but the symbolic matrix $\vA$ remains unchanged. We denote the matrix $\tvM_1$ after this process by,
\[
\tvM_2 = \begin{bmatrix}
	\vA_{k \times k} & \MX_{k \times r} & \MB'_{k \times (n-k-r)} \\
	\MY_{r \times k} & \MI_{r \times r} & \MO_{r \times (n-k-r)} \\
	\MC'_{(n-k-r) \times k} & \MO_{(n-k-r) \times r}  & \MO_{(n-k-r)\times(n-k-r)}
	\end{bmatrix}
\]

\ssection{Step~3.} Use the sub-matrix $\MI_{r \times r}$  in $\tvM_2$ to zero out the matrix $\MX$ by elementary \emph{row} operations. Similarly, zero out $\MY$ by elementary \emph{column} operations. Note that after this process, the matrix $\vA$ would be added by a linear combination of the rows in $\MY$, denoted by $\MA'$. Denote the resulting matrix by,
 \[
\tvM_3 = \begin{bmatrix}
	\vA_{k \times k} + \MA'_{k \times k}& \MO_{k \times r} & \MB'_{k \times (n-k-r)} \\
	\MO_{r \times k} & \MI_{r \times r} & \MO \\
	\MC'_{(n-k-r) \times k} & \MO & \MO
	\end{bmatrix}
\]

\ssection{Step~4.} Consider the matrix $\MB'$ in $\tvM_3$; pick a maximal set of \emph{linearly independent} columns from $\MB'$ (if less than $k$ columns are picked, arbitrarily pick from the remaining columns until having picked $k$ columns), denoted by $\MB''_{k \times k}$.
Do the same for the matrix $\MC'$ (but using linearly independent rows) and create $\MC''_{k \times k}$. Finally, pick $k^2$ numbers from $\IZ_p$, independently and uniformly at random and form a matrix of dimension $k \times k$, denoted by $\hat{\bA}$.
Store the value $r$ (i.e., the rank of $\MD$), the matrix $\hat{\bA}$, and three $k \times k$ matrices $\MA'$, $\MB''$ and $\MC''$ as the sketch.

\ssection{Extraction algorithm:} Given a query $Q$, create the matrix $\hat{\bA}_Q$ from $\hat{\bA}$ by zeroing out every entry that corresponds to an edge \emph{not} in $Q$. Evaluate $\vA$ by $\hat{\bA}_Q$ and obtain a (non-symbolic) matrix $\MA$. Construct a matrix $\hat{\bM}$ as follows,
\[
	\hat{\bM} = \begin{bmatrix}
	\MA_{k \times k} + \MA'_{k \times k}& \MB''_{k \times k} \\
	\MC''_{k \times k} &   \MO_{k \times k}
	\end{bmatrix}
\]
Return $\paren{\rank(\hat{\bM})+r}/2$ as the maximum matching size.

We now prove the correctness of this scheme and show that it satisfies the bound given in Theorem~\ref{thm:matching} and hence prove this theorem.

\begin{proof}[Proof of Theorem~\ref{thm:matching}]
   Since the prime $p$ is of magnitude $\Theta(n/\delta)$, any number in $\IZ_p$ requires $O(\log(n/\delta)) = O(\log{n} + \log(1/\delta))$ bits to store, which
    is at most $O(\log(1/\delta))$ machine words. The compression algorithm stores a number $r$, which needs $O(\log{n})$ bits, four matrices of dimension $k \times k$,
    where each entry is a number in $\IZ_p$ and requires $O(\log(1/\delta))$ space. Therefore, the total sketch size is $O(k^2 \log(1/\delta))$. We now prove the correctness.

    We need to show that for each query $Q$, the extraction algorithm correctly outputs the matching size with probability at least $1 - \delta$.
    By Lemma~\ref{lem:rank-lovasz},
    \[
    \PR{\rank(\tM) = 2\nu(G^Q)} \geq 1 - \frac{n}{p} \ge 1 - \delta
    \]
     Here $\tM$ is the (randomly evaluated) Tutte matrix of the graph obtained by applying the query $Q$ to $G$, i.e., $G^Q$.
     Since the extraction algorithm outputs $\paren{\rank(\hat{\bM})+r}/2$ as the matching size, it suffices for us to
    show that $\rank(\hat{\bM}) + r = \rank(\tM)$.

    More specifically, the extraction algorithm evaluates $\vA$ by assigning a (pre-selected) random number to each entry that corresponds to an edge in $Q$, i.e, the matrix $\hat{\bA}_Q$. For the sake of analysis, assume this is
    done before the compression algorithm is executed. Then, at the first step of the compression algorithm, all entries of the matrices $\MB, \MC, \MD$ are  randomly and independently evaluated.
    Combined with evaluating $\vA$ by $\hat{\bA}_Q$, the resulting matrix (denoted by $\bM_1$) is obtained from randomly and independently evaluating
    every non-zero entry of the Tutte matrix of the graph $G^Q$. In other words, it suffices to show that $\rank(\bM_1) =  \rank(\hat{\bM}) + r$.

    Since step~2 and step~3 only perform elementary row/column operations on the matrix, the rank does not change. For the matrix $\tvM_3$ obtained after step~3, denote by $\bM_3$ the matrix
    after evaluating the $\vA$ part in  $\tvM_3$. $\bM_3$ is non-symbolic and it suffices to prove that $\rank(\bM_3) = \rank(\hat{\bM}) + r$. Note that after reordering rows and columns of $\bM_3$, $\bM_3$ can be rewritten as
 \[
    \begin{bmatrix}
	\bA_{k \times k} + \MA'_{k \times k} & \MB'_{k \times (n-k-r)} & \MO\\
	\MC'_{(n-k-r) \times k}                       & \MO & \MO \\
     \MO                                 & \MO & \MI_{r \times r} \\
	\end{bmatrix}
\]
       Therefore, the rank of $\bM_3$ is equal to $r$ plus the rank of the following sub-matrix of $\bM_3$.
 \[
    \bM_4 = \begin{bmatrix}
	\bA_{k \times k} + \MA'_{k \times k} & \MB'_{k \times (n-k-r)} \\
	\MC'_{(n-k-r) \times k}                       & \MO\\
	\end{bmatrix}
\]
    We now show that $\bM_4$ has the same rank as $\hat{\bM}$. Since the matrix $\MC''$ (in step~4 of the compression algorithm) contains a maximal set of linearly independent rows of $\MC'$, each remaining row of $\MC'$ is a linear combination of the rows in $\MC''$. Therefore, all remaining rows in $\MC'$ can be zero-out using elementary row operations. Hence, the rank of $\bM_4$ is equal to the rank of the following matrix
 \[
    \bM_5 = \begin{bmatrix}
	\bA_{k \times k} + \MA'_{k \times k} & \MB'_{k \times (n - k - r)} \\
	\MC''_{k \times k}                       & \MO \\
    \MO_{(n-2k-r) \times k}                       & \MO \\
	\end{bmatrix}
\]
    Similarly, using elementary column operations, the sub-matrix $\MB'$ in $\bM_5$ can be made into $[\MB''~~\MO_{k \times (n - 2k -r)}]$ without changing the rank, and the resulting matrix has the same rank as $\hat{\bM}$. 
    
    \end{proof}

\subsection{An $\Omega(k^2)$ size lower bound}

In this section, we prove an $\Omega(k^2)$ bits lower bound on the sketch size of any $k$-dynamic sketching scheme for the matching problem, which implies that our space upper bound
in Theorem~\ref{thm:matching} is tight (up to a logarithmic factor). 
We establish this lower bound by reducing from the \membership problem studied in communication complexity defined as follows.

\textbox{The \membership Problem}
{
\\
\textbf{Input}: Alice is given a set $S \subseteq [N]$ and Bob is given an element $e^* \in [N]$. \\
\textbf{Goal}: Alice has to send a message to Bob such that Bob can determine whether $e^* \in S$ or not.
}

It is well-known that in order for Bob to succeed with probability at least $2/3$, Alice has to send a message of size $\Omega(N)$ bits~\cite{Ablayev96}, where the probability is taken 
over the random coin tosses of Alice and Bob.

\ssection{Reduction.} For simplicity, assume $N$ is a perfect square. Given any $S \subseteq [N]$, Alice constructs a graph $G(V,E)$ with a set $T$ of $k$ terminals as follows:
\begin{itemize}
	\item The vertex set $V = \set{u,w} \union V_1 \union V_2 \union V_3 \union V_4$, where $\card{V_i} = \sqrt{N}$ for any $i \leq 4$ and $T = \set{u,w} \union V_1 \union V_4$. 
	We will use $v^{(i)}_j$ to denote the $j$-th vertex in $V_i$.
	\item For any $i \in [\sqrt{N}]$, $v^{(1)}_i$ (resp. $v^{(3)}_i$) is connected to $v^{(2)}_i$ (resp. $v^{(4)}_i$); i.e, there is perfect matching between $V_1$ and $V_2$ (resp. $V_3$ and $V_4$). 
	\item Fix a bijection $\sigma: [N] \mapsto [\sqrt{N}] \times [\sqrt{N}]$; for any element $e \in S$ with $\sigma(e) = (i,j)$,  $v^{(2)}_i$ is connected to $v^{(3)}_j$.
\end{itemize}
Note that in this construction, $n = 4\sqrt{N} + 2$ and $k = 2\sqrt{N} + 2$, and initially there is no edge between the terminals. 

Alice constructs this graph, run the compression algorithm of the dynamic sketching scheme on it and sends the sketch to Bob. 
Let $Q$ be the query in which, for $\sigma(e^*) = (i,j)$, $u$ is connected to $v^{(1)}_i $ and $v^{(4)}_j$ is connected to $w$. Bob queries the sketch with $Q$, finds the maximum 
matching size in $G^{Q}$, and returns $e^* \in [S]$ iff the maximum matching size is $2\sqrt{N} + 1$ in $G^{Q}$.

\begin{lemma}\label{lem:match-lower}
	 $\nu(G^{Q}) = 2\sqrt{N} + 1$ if and only if $e^* \in S$.
\end{lemma}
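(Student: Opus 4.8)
The plan is to analyze the structure of $G^Q$ directly and argue about maximum matchings by a path/augmenting-structure argument. First I would set up the picture: $G^Q$ consists of the perfect matching edges between $V_1,V_2$ and between $V_3,V_4$, the ``middle'' edges between $V_2$ and $V_3$ encoding the set $S$, and the two new query edges $(u, v^{(1)}_i)$ and $(v^{(4)}_j, w)$ where $\sigma(e^*)=(i,j)$. Note $|V|=4\sqrt N+2$, so a perfect matching (size $2\sqrt N+1$) is the best one could hope for; I will show this maximum is attained exactly when $e^*\in S$.

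For the ``if'' direction, suppose $e^* \in S$, so there is an edge $v^{(2)}_i v^{(3)}_j$ in $G$. Then I would exhibit an explicit perfect matching of $G^Q$: take the edge $u v^{(1)}_i$, the edge $v^{(2)}_i v^{(3)}_j$, the edge $v^{(4)}_j w$, then for every $\ell \neq i$ take $v^{(1)}_\ell v^{(2)}_\ell$ and for every $\ell \neq j$ take $v^{(3)}_\ell v^{(4)}_\ell$. Every vertex of $V$ is covered exactly once, so $\nu(G^Q) = 2\sqrt N + 1$.

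For the ``only if'' direction I would argue the contrapositive: if $e^* \notin S$, then $G^Q$ has no perfect matching, i.e. $\nu(G^Q) \le 2\sqrt N$. The cleanest route is to find an odd component after deleting a small vertex set, i.e. invoke the Tutte–Berge / Tutte condition. Consider deleting the set $\{v^{(2)}_i, v^{(3)}_j\}$ (the two ``internal'' endpoints of the removed/absent middle edge): I claim this isolates $u$ together with $v^{(1)}_i$ — since $v^{(1)}_i$'s only neighbors are $u$ and $v^{(2)}_i$ — as a component of size $2$... hmm, that is even, so I need to be more careful. The right cut is to delete the single vertex $v^{(2)}_i$ (and symmetrically think about the two sides). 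Deleting $v^{(2)}_i$ leaves $v^{(1)}_i$ adjacent only to $u$; and $u$ is adjacent only to $v^{(1)}_i$; so $\{u, v^{(1)}_i\}$ is a component — even. Instead I would delete the $2$-set $\{v^{(2)}_i, \text{(the unique neighbor of $v^{(3)}_j$ in $V_2$, if $e^*\in S$)}\}$; since $e^*\notin S$, $v^{(3)}_j$ has no neighbor in $V_2$, so after deleting just $v^{(2)}_i$ the vertices $\{u,v^{(1)}_i\}$ form one component and $v^{(3)}_j, v^{(4)}_j, w$ — with $v^{(3)}_j$ adjacent (in $V_2$) to nothing outside — form a path component $w - v^{(4)}_j - v^{(3)}_j$ of size $3$, which is odd. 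So deleting the single vertex $v^{(2)}_i$ creates at least one odd component plus one even component; that gives at most $1$ more than the number of deleted vertices... I need $o(G - U) \ge |U| + 2$ for no perfect matching. Deleting $U = \{v^{(2)}_i, v^{(3)}_j\}$: then $\{u, v^{(1)}_i\}$ is even (size $2$), and $\{w, v^{(4)}_j\}$ is even (size $2$) — still not enough. The genuinely correct argument is a global parity/deficiency count: I would show that any matching must leave at least one of $u, w$ unmatched unless the $u$–$w$ path can be completed through a middle edge at coordinates $(i,j)$, which exists iff $e^* \in S$; formally, $u$ can only be matched to $v^{(1)}_i$, which forces $v^{(2)}_i$ to be matched into $V_3$, i.e. via a middle edge $v^{(2)}_i v^{(3)}_\ell$, and similarly $w$ forces a middle edge $v^{(2)}_m v^{(3)}_j$; chasing the alternating structure, completing a perfect matching requires a middle edge precisely at $(i,j)$. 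The main obstacle is making this last chase rigorous without case explosion — I expect the slickest writeup is Tutte–Berge with the cut $U = \{v^{(2)}_i\}$ on a suitably reduced graph, or an induction peeling off the two $\sqrt N$-perfect-matchings; I would present the explicit perfect matching for one direction and the alternating-path obstruction argument for the other.
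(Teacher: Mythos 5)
Your ``if'' direction (exhibiting the explicit perfect matching when $e^*\in S$) is correct and matches the paper. The gap is in the ``only if'' direction, where you circle through several Tutte--Berge cuts that you correctly notice do not produce the needed deficiency, and then fall back to a forcing sketch that you yourself flag as not yet rigorous (``the main obstacle is making this last chase rigorous''). That is a genuine hole as written: the lemma is not proved.

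For the record, both routes can be closed in a few lines, and the paper's is the cleaner of the two. The paper does not reason about perfect matchings at all; it fixes the near-perfect matching $M$ of size $2\sqrt N$ that you also describe (pair $u$ with $v^{(1)}_i$, $w$ with $v^{(4)}_j$, and $v^{(1)}_{i'}v^{(2)}_{i'}$, $v^{(3)}_{j'}v^{(4)}_{j'}$ for $i'\neq i$, $j'\neq j$), observes that the only unmatched vertices are $v^{(2)}_i$ and $v^{(3)}_j$, so any augmenting path must join exactly these two, and then notes that every $M$-alternating path starting at $v^{(2)}_i$ dead-ends after two steps (either at $u$ or at some $v^{(4)}_\ell$ with $\ell\neq j$, both degree-$1$ in $G^Q$). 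Since there is no direct edge $(v^{(2)}_i,v^{(3)}_j)$ when $e^*\notin S$, no augmenting path exists and $M$ is maximum. Your forcing argument can also be finished, but what it is missing is the observation that for every $i'\neq i$ the vertex $v^{(1)}_{i'}$ has degree $1$ in $G^Q$ (its only neighbor is $v^{(2)}_{i'}$), so in any perfect matching $v^{(1)}_{i'}v^{(2)}_{i'}$ is forced, and symmetrically $v^{(3)}_{j'}v^{(4)}_{j'}$ is forced for $j'\neq j$; this leaves $v^{(2)}_i$ and $v^{(3)}_j$ as the unique candidates on each side for a middle edge, so a perfect matching requires the edge $(v^{(2)}_i,v^{(3)}_j)$, i.e.\ $e^*\in S$ --- no further ``chase'' or case analysis is needed. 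Either completion is fine; the paper's has the advantage of proving $\nu(G^Q)=2\sqrt N$ exactly (not merely $<2\sqrt N+1$) in the $e^*\notin S$ case, directly from the single matching $M$ and the absence of augmenting paths.
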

\begin{proof}
	Let $\sigma(e^*) = (i,j)$. Consider the matching $M$ where  $u$ (resp. $w$) is matched with $v^{(1)}_i$ (resp. $v^{(4)}_j$), and $v^{(1)}_{i'}$ (resp. $v^{(3)}_{j'}$)
	is matched with $v^{(2)}_{i'}$ (resp. $v^{(4)}_{j'}$) for any $i' \neq i$ (resp. $j' \neq j$). Note that $\card{M} = 2\sqrt{N}$ and both $v^{(2)}_i$ and $v^{(3)}_j$ are unmatched in $M$.
	
	First, suppose $e^* \in S$; in this case, the edge $e = (v^{(2)}_i,v^{(3)}_j)$ is in $G$ and can be added to $M$. Matching $M \union \set{e}$ is a perfect matching; hence, in this case, $\nu(G^Q) = 2\sqrt{N} + 1$.
	
	Now, suppose $e^* \notin S$; we claim $M$ is a maximum matching in this case. Indeed, the only two vertices that are unmatched in $G^Q$ are $v^{(2)}_i$ and $v^{(3)}_j$. Hence, any augmenting
	path of $M$ must have $v^{(2)}_i$ and $v^{(3)}_j$ as endpoints. However, there is no edge between $v^{(2)}_i$ and $v^{(3)}_j$, and moreover any alternative 
	path that starts from $v^{(2)}_i$ (resp. $v^{(3)}_j$), has length at most $2$. Consequently, there is
	no augmenting path for $M$ and hence $M$ is a maximum matching, implying that $\nu(G^Q) = 2\sqrt{N}$.
	
\end{proof}

Theorem~\ref{thm:matching-lower} now follows from Lemma~\ref{lem:match-lower}, along with the lower bound of $\Omega(N) = \Omega(k^2)$ on the 
communication complexity of the \membership problem. 

\newcommand{\eminus}{\ensuremath{e^{-}}\xspace}
\newcommand{\eplus}{\ensuremath{e^{+}}\xspace}
\newcommand{\qtoe}{\ensuremath{q^{\rightarrow e}}\xspace}
\newcommand{\qtoep}[2]{\ensuremath{{#1}^{\rightarrow {#2}}}\xspace}
\newcommand{\qfrome}{\ensuremath{q^{\leftarrow e}}\xspace}
\newcommand{\qfromep}[2]{\ensuremath{{#1}^{\leftarrow {#2}}}\xspace}
\newcommand{\qab}{\ensuremath{Q_{A,B}}\xspace}
\newcommand{\mstar}{\ensuremath{M^*}\xspace}

\newcommand{\op}{\textnormal{\textbf{op}}}
\newcommand{\tc}{\textnormal{TC}}

\section{Cut-Preserving Sketches}\label{sec:sparsifier}

We establish a connection between $k$-dynamic sketching schemes for the maximum matching problem and cut-preserving sketches. In particular, we use our dynamic sketching scheme for the matching
problem in Section~\ref{sec:matching} to design an exact cut-preserving sketch (i.e, an information-theoretic vertex sparsifier) with size $O(kC^2)$, where $C$ is the total capacity of the edges incident on the terminals. This
matches the best known upper bound on the space requirement of cut-preserving sketches. We further provide an improved lower bound of $\Omega(C/\log{C})$ on the size of any cut-preserving sketch.
Throughout this section, we will use the term \emph{bipartition cut} to refer to a cut between a bipartition of the terminals and the term \emph{terminal cut} to refer to a cut which separates
two arbitrary disjoint subsets of terminals (not necessarily a bipartition). With a slight abuse of notation, we refer to the value of the minimum cut for a bipartition/terminal cut as the value of the bipartition/terminal cut directly.

Before we present our results, we make a general remark about the property of all cut-preserving sparsifiers (and sketches) that is also used crucially in our lower bound proof.
In~\cite{krauthgamer2013mimicking}, a \emph{generalized} sparsifier is defined as a sparsifier that preserves the minimum cut between
all disjoint subsets of terminals, i.e, terminal cuts and not only bipartition cuts. The authors then point out that their upper bound results, as well as the previous constructions
of cut sparsifiers in~\cite{hagerup1998characterizing}, also satisfy this general definition. The following simple claim gives an explanation why all
known cut sparsifiers satisfy this general definition.

\begin{claim}\label{clm:general-cs}
	Suppose $H$ is a cut sparsifier of the graph $G(V,E)$ with terminals $T$ that preserves the value of all bipartition cuts. Then, $H$
	also preserves the value of all terminal cuts.
\end{claim}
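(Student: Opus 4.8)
The plan is to reduce every terminal cut to a minimum over a family of bipartition cuts, via an identity that holds in \emph{any} graph on the vertex set containing $T$, and then invoke the hypothesis on $H$ term by term. Concretely, fix disjoint $A,B \subseteq T$; we may assume both are nonempty, since otherwise the minimum cut separating them has value $0$ in every graph containing $T$ and there is nothing to prove. For a graph $G'$ with $T \subseteq V(G')$, let $\mathrm{mincut}_{G'}(X,Y)$ denote the value of a minimum-capacity edge cut separating $X$ from $Y$. The claim I will establish is
\[
\mathrm{mincut}_{G'}(A,B) \;=\; \min_{A \,\subseteq\, S \,\subseteq\, T \setminus B} \mathrm{mincut}_{G'}\!\big(S,\, T \setminus S\big),
\]
where the minimum ranges over all subsets $S$ of $T$ with $A \subseteq S$ and $S \cap B = \emptyset$ (each such $S$ induces a genuine bipartition of $T$, as $S \supseteq A \neq \emptyset$ and $T\setminus S \supseteq B \neq \emptyset$).

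For the direction ``$\leq$'': any cut $(U, V(G')\setminus U)$ separating $S$ from $T\setminus S$ with $A\subseteq S$ and $B\subseteq T\setminus S$ in particular separates $A$ from $B$, so its capacity is at least $\mathrm{mincut}_{G'}(A,B)$; minimizing over such cuts and over all admissible $S$ gives the inequality. For ``$\geq$'': take a minimum-capacity cut $(U, V(G')\setminus U)$ realizing $\mathrm{mincut}_{G'}(A,B)$, with $A\subseteq U$ and $B\subseteq V(G')\setminus U$, and set $S := U\cap T$. Then $A\subseteq S$ (since $A\subseteq T$ and $A\subseteq U$) and $S\cap B = \emptyset$ (since $B\cap U=\emptyset$), so $S$ is admissible; moreover $T\setminus S = T\cap (V(G')\setminus U) \subseteq V(G')\setminus U$, so the \emph{same} cut $(U,V(G')\setminus U)$ separates $S$ from $T\setminus S$, whence $\mathrm{mincut}_{G'}(S,T\setminus S) \leq \mathrm{mincut}_{G'}(A,B)$. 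This proves the identity.

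The identity is purely combinatorial, so it applies verbatim to both $G$ and $H$. Since $H$ preserves the value of every bipartition cut, we have $\mathrm{mincut}_G(S, T\setminus S) = \mathrm{mincut}_H(S, T\setminus S)$ for each admissible $S$; taking the minimum of both sides over all such $S$ and applying the identity to $G$ on the left and to $H$ on the right yields $\mathrm{mincut}_G(A,B) = \mathrm{mincut}_H(A,B)$, which is exactly the assertion of the claim. There is essentially no obstacle; the only point deserving a moment's care is the ``$\geq$'' direction, where one must check that intersecting an optimal $A$--$B$ cut with $T$ produces a set $S$ that (i) still lies in the admissible range $A\subseteq S\subseteq T\setminus B$ and (ii) is still separated by that very cut — both of which are immediate set-theoretic facts.
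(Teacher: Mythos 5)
Your proof is correct and follows essentially the same approach as the paper's: you express the $(A,B)$ minimum cut as the minimum over bipartition cuts $(S,T\setminus S)$ with $A\subseteq S$ and $B\subseteq T\setminus S$, and then transfer the hypothesis on $H$ term by term. The paper's version is terser, but the two key observations it invokes — that any $A$-$B$ cut induces such a bipartition of $T$, and that one can query all admissible bipartitions and take the smallest — are exactly the ``$\geq$'' and ``$\leq$'' directions of your identity; you have simply made the argument more explicit.
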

\begin{proof}
For any two disjoint subsets of terminals $A,B \subseteq T$, any cut separating $A$ and $B$ in $G$ must form a bipartition $(S,\bar{S})$ of the terminals, and
since $H$ preserves the value of all minimum cuts like $(S,\bar{S})$, the $(A,B)$ minimum cut value in $H$ is also equal to the minimum cut value in $G$.
In the case that $H$ is a cut-preserving sketch, the $(A,B)$ minimum cut
can be answered by querying $H$ with all bipartition cuts that separate $(A,B)$, and outputting the smallest value.

\end{proof}

\subsection{An $O(k C^2)$ Size Cut-Preserving Sketch}\label{sec:sparsifier-upper}

In this section, we construct a cut-preserving sketch for any digraph $G$ and a set of terminals $T$. We achieve this by constructing an
instance $G'$ of the maximum matching problem in the dynamic sketching model and
show that the value of any terminal cut $(A,B)$ in $G$ can be computed using a carefully designed query for the maximum matching size in $G'$.
Our reduction is based on a classical result relating edge connectivity and bipartite matching due to~\cite{hoffman1960some} (see also~\cite{schrijver2003combinatorial}, Section $16.7$).

\begin{theorem}\label{thm:vertex-sparsifier-upper}
  For any directed graph $G$ with a set of $k$ terminals, there is an exact cut-preserving sketch that uses space $O(k C^2)$, where $C$ is the total capacity of the edges incident on the terminals.
\end{theorem}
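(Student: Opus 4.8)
The plan is to reduce the computation of terminal cuts in $G$ to maximum matching queries in a suitably constructed graph $G'$, and then invoke the dynamic sketching scheme for matchings from Theorem~\ref{thm:matching}. The classical result of~\cite{hoffman1960some} expresses the value of a minimum $s$-$t$ cut (equivalently, by max-flow/min-cut, the maximum $s$-$t$ flow) in a capacitated digraph as a deficiency-type quantity of an associated bipartite graph; concretely, one builds a bipartite graph whose two sides are copies of the vertices (or of the edges) of $G$, with multiplicities governed by the capacities, so that maximum matchings in the bipartite graph encode maximum flows. First I would spell out this bipartite gadget: replace each edge $e$ of capacity $c_e$ by $c_e$ parallel "edge-vertices", and attach the appropriate matching structure so that a perfect matching corresponds to saturating all edges and a maximum matching of deficiency $d$ corresponds to an $s$-$t$ flow of value depending on $d$. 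The total number of edge-vertices created for edges incident on terminals is exactly $C$, so the terminals of $G$ give rise to $O(C)$ terminals in $G'$; all remaining structure (edges not incident on terminals) is static.

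Next I would handle queries. A terminal cut asks for the minimum cut separating two disjoint subsets $A, B \subseteq T$; by adding a super-source $s$ connected to $A$ and a super-sink $t$ connected to $B$ with infinite (or sufficiently large, i.e. capacity $C$) capacity, this becomes an $s$-$t$ min-cut, hence an $s$-$t$ max-flow, hence (via the Hoffman gadget) a maximum matching computation in $G'$ augmented by edges among its terminals. The crucial point is that the edges added to $G'$ to encode "connect $s$ to $A$, connect $B$ to $t$, set these capacities to $C$" are all edges between the $O(C)$ terminals of $G'$ — i.e. exactly a query in the sense of Definition~\ref{def:k-dynamic}. Thus each terminal cut of $G$ is answered by one matching query on $G'$, and by Claim~\ref{clm:general-cs} it suffices to preserve bipartition cuts, which are a special case. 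Applying the matching sketch of Theorem~\ref{thm:matching} to $G'$, with the number of terminals now $\Theta(C)$ — wait, that would give $O(C^2)$ per the matching bound, not $O(kC^2)$; so I would instead be more careful and only blow up the $k$ terminal-incident edges into $O(C)$ edge-vertices while keeping the dependence on the original $k$ explicit, so that the number of terminals of $G'$ is $O(kC)$ only in the worst case but the sketch can be organized to cost $O(kC^2)$. Concretely the matching sketch has size $O((k')^2)$ where $k'$ is the number of terminals of $G'$; arranging $k' = O(C)$ with $C \ge k$ gives $O(C^2)$, and the extra factor of $k$ comes from needing to answer all $\binom{k}{\cdot}$-many bipartition queries, or more precisely from a per-terminal bookkeeping overhead in translating between cuts of $G$ and matchings of $G'$. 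I would present the reduction so that the final accounting yields $O(kC^2)$ machine words.

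The main obstacle I anticipate is getting the bipartite (Hoffman) gadget exactly right so that (a) the correspondence "terminal cut value $\leftrightarrow$ maximum matching size" is an exact identity, not merely an approximation or inequality; (b) all the edges that must be toggled to encode a query $(A,B)$ genuinely lie between terminals of $G'$ and nowhere else, so that the static part of $G'$ is query-independent; and (c) capacities, which may be as large as $n$, are encoded by multiplicities without destroying the $\poly(k)$-in-the-right-parameter size bound — this is where the $C^2$ factor unavoidably enters, since a capacity-$c$ edge incident on a terminal contributes $c$ terminals to $G'$ and the matching sketch is quadratic in the terminal count. Verifying (a) rigorously requires checking both directions: every terminal cut of value $V$ yields a matching of the claimed size (by routing a min-cut-saturating flow through the gadget), and every maximum matching yields a cut of value at most $V$ (by reading off the unmatched/deficiency structure and invoking LP duality / König-type reasoning inside the gadget). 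Once the gadget and the query translation are pinned down, the sketch construction and size bound follow immediately by feeding $G'$ into Theorem~\ref{thm:matching}.
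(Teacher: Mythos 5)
Your proposal takes the same route as the paper: replace capacity-$c_e$ edges by $c_e$ parallel edges, build a bipartite "Hoffman" gadget $G'$ whose terminals number $\Theta(C)$, encode each terminal cut $(A,B)$ as a set of edge insertions among $G'$'s terminals, and answer it with the matching sketch of Theorem~\ref{thm:matching}. The paper's gadget creates, for each edge $e$ of $G$, vertices $e^-$ (in $L$) and $e^+$ (in $R$), plus for each terminal-incident edge a partner vertex $q^{\rightarrow e}$ or $q^{\leftarrow e}$; adjacency in $G$ is encoded by edges $(e_1^+, e_2^-)$, and the query $Q_{A,B}$ toggles the edges between partner vertices and their $e^{\pm}$ counterparts. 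Your super-source/super-sink framing would have to be translated into such toggles, which is doable but you leave it at the level of a plan; the paper's Hall-type argument (augmenting paths $\leftrightarrow$ edge-disjoint $A$--$B$ paths) is the concrete content of condition (a) you correctly flag as needing verification.

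The genuine gap is the accounting for the extra factor of $k$. You notice, correctly, that applying Theorem~\ref{thm:matching} with $k' = \Theta(C)$ terminals naively gives $O(C^2)$, not $O(kC^2)$, but you then offer two vague candidate explanations ("needing to answer all $\binom{k}{\cdot}$-many bipartition queries" or "per-terminal bookkeeping overhead") without committing to, or substantiating, either. The actual source of the $k$ factor is precise and comes from error amplification: Theorem~\ref{thm:matching} gives a sketch of size $O((k')^2 \log(1/\delta))$ that answers \emph{one} query correctly with probability $1-\delta$, and since the cut-preserving sketch must simultaneously answer all $\le 3^k$ terminal cuts, one sets $\delta = 3^{-k}$ by a union bound, yielding $\log(1/\delta) = \Theta(k)$ and hence $O((2C)^2 \cdot k) = O(kC^2)$. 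Your first guess is gesturing in the right direction but does not say \emph{how} "answering all queries" costs a factor of $k$ (it is via $\log(1/\delta)$, not via storing per-query data), and your second guess ("per-terminal bookkeeping") is simply not what happens. Without this union-bound step the size bound you would prove is $O(C^2 \log(1/\delta))$ for a fixed $\delta$, which is a \emph{weaker} per-query guarantee, not the "exact cut-preserving sketch" claimed by the theorem.
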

Without loss of generality, we will replace each edge in $G$ with capacity $c_e$ with $c_e$ parallel edges and still denote the new graph with $G$. Consider the following cut-preserving sketch.
\textbox{A cut-preserving sketch}{
\begin{enumerate}
  \item[] \textbf{Input:} A graph $G$ with $m$ edges and a set $T$ of terminals.
  \item[$\bullet$] \textbf{Compression:} Construct a bipartite graph $G'(L,R,E')$ with terminals $T'$ as follows and create a dynamic sketch for the maximum matching problem for $G'$ and $T'$.
    \begin{enumerate}
        \item For each edge $e$ in $G$, create two vertices \eminus (in $L$) and \eplus (in $R$).
        \item For any terminal $q$ in $T$ and any outgoing (resp. incoming) edge $e$ of $q$, create a vertex \qtoe in $R$ (resp. \qfrome in $L$). \qtoe (resp. \qfrome), along with \eminus (resp. \eplus), belongs to the set $T'$ of terminals in $G'$.
        \item For each edge $e$ in $G$, there is an edge between vertices \eminus and \eplus in $G'$.
        \item For any two edges $e_1$ and $e_2$ in $G$ where the tail of $e_1$ is the head of $e_2$, there is an edge between the vertices $\eplus_1$ and $\eminus_2$ in $G'$.
    \end{enumerate}
    \item[$\bullet$] \textbf{Extraction:} Given any two disjoint subsets $A, B \subseteq T$, let \qab be the query where for any terminal $q$ in $A$ (resp. in $B$) and any outgoing (resp. incoming) edge $e$ of $q$, an edge between
    the vertices \qtoe and \eminus (resp. \qfrome and \eplus) is inserted in $G'$.

    Return $\nu(G'^{\qab}) - m$, where $\nu(G'^{\qab})$ is the maximum matching size in $G'^{\qab}$.
\end{enumerate}
}

The total number of terminals in $G'$ is $2C$, and the total number of different $A$-$B$ pairs (i.e., the total number of different possible queries) is at most $3^k$.
To ensure that every query is answered correctly, by Theorem~\ref{thm:matching}, the sketch size is  $O(k C^2)$. We now prove the correctness.

\begin{proof}
    For any $A,B \subseteq T$, denote the value of the minimum $A$-$B$ cut (which is equal to the edge-connectivity from $A$ to $B$), by $c(A,B)$. Recall that for a graph $G$, $\nu(G)$ denotes the maximum matching size
    in $G$.  We prove that $\nu(G'^{\qab}) - m = c(A,B)$. Let $M$ be the matching in $G^{\qab}$ where for each edge $e$ in $G$, \eminus is matched with \eplus; hence $\card{M} = m$.

    We first show that if $c(A,B) = l$, then $M$ can be augmented by $l$ vertex disjoint paths and hence $\nu(G'^{\qab}) \geq m+l$. There are $l$ edge-disjoint path $P_1, P_2, \ldots, P_l$ from $A$ to $B$ in $G$. For each
    path $P_i = (e_1, e_2, \ldots, e_j)$, where $e_1$ starts with a terminal $q_a \in A$ and $e_j$ ends with a terminal $q_b \in B$,  create a
    path $P'_i = (\qtoep{q_a}{e_1}, \eminus_1, \eplus_1, \eminus_2, \ldots, \eplus_j, \qfromep{q_b}{e_j})$ in $G'^{\qab}$. It is straightforward to
    verify that the $P'_i$ paths are valid \emph{vertex-disjoint} paths in $G'^{\qab}$ and moreover, form disjoint augmenting paths of the matching $M$. 

    We now show that if the maximum matching $\mstar$ in $G'$ is of size $m + l$, then $c(A,B) \ge l$. The symmetric difference between $M$ and $\mstar$ forms a graph with $l$ augmenting paths of the matching $M$.  Each
    augmenting path must start and end with a vertex of the form \qtoe or \qfrome since they are the only vertices that are unmatched in $M$. Since every \qtoe is in $R$ and every $\qfrome$ is in $L$, each augmenting path must
    start with a \qtoe vertex and ends with a \qfrome vertex. Using the reversed transformation as in the previous case, the $l$ augmenting paths can be converted into $l$ edge disjoint paths from $A$ to $B$ in $G$.
\end{proof}

\subsection{An $\tilde{\Omega}(C)$ Size Lower Bound} \label{sec:sparsifier-lower}

In this section, we provide a lower bound on the size of any cut-preserving sketch.

\begin{theorem}\label{thm:cs-lower}
	For any integer $C > 0$, any cut-preserving sketch for $k$-terminal undirected graphs, where the total capacity of edges incident on the terminals is equal to $C$, requires $\Omega(C/\log{C})$ bits.
\end{theorem}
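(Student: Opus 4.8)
The plan is to prove the $\Omega(C/\log C)$ lower bound by an incompressibility / counting argument: exhibit a large family $\mathcal{G}$ of $k$-terminal graphs, all with total terminal capacity $C$, such that every two distinct members of $\mathcal{G}$ are distinguished by the value of some bipartition cut. If $|\mathcal{G}| \geq 2^{\Omega(C/\log C)}$ and any cut-preserving sketch is a function that determines all bipartition-cut values, then two distinct graphs in $\mathcal{G}$ cannot share a sketch, so the sketch must have $\log_2|\mathcal{G}| = \Omega(C/\log C)$ bits. (For a randomized sketch that errs with probability $\leq 1/3$ one runs the same argument with a standard amplification/union-bound, or invokes a distributional version; the information-theoretic core is the same.)

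The construction I would use is a bipartite-style gadget in the spirit of the matching lower bound (Lemma~\ref{lem:match-lower}) and of~\cite{krauthgamer2013mimicking}. Take a modest number $t = \Theta(C/\log C)$ of terminals. Partition the terminals into two sides $A_0$ and $B_0$ of size $t/2$ each, and for each pair $(a,b) \in A_0 \times B_0$ introduce an internal ``router'' vertex $w_{a,b}$ (or a short path) wired so that the presence/absence of a single unit-capacity edge at $w_{a,b}$ controls whether the min cut for a particular bipartition changes by exactly one. Concretely, each $w_{a,b}$ should be attached to $a$ and to $b$ by unit-capacity edges and the parameter we get to choose is one bit $z_{a,b}\in\{0,1\}$ saying whether an additional unit of ``throughput'' is available through $w_{a,b}$; the family $\mathcal{G}$ is indexed by the bit-string $(z_{a,b})_{(a,b)}$, giving $|\mathcal{G}| = 2^{(t/2)^2} = 2^{\Theta(C^2/\log^2 C)}$ candidates — more than enough; we only need $2^{\Omega(C/\log C)}$, so in fact we can afford to make each $w_{a,b}$ robustly identifiable. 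The key structural lemma to prove is: for the bipartition $(S,\bar S)$ with $S = \{a\}\cup(B_0\setminus\{b\})$ (i.e. isolating the single pair $(a,b)$ across the cut), the min-cut value in $G_z$ equals a fixed base value plus $z_{a,b}$; hence querying the sketch on these $(t/2)^2$ specific bipartitions recovers every bit $z_{a,b}$, so distinct $z$ force distinct sketches.

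The two things to get right — and where the real work is — are (i) the capacity budget: every terminal touches $\Theta(t)$ unit edges, so the total terminal capacity is $\Theta(t^2)$, and to make this equal $C$ I must take $t = \Theta(\sqrt C)$, which gives the stronger count $2^{\Theta(C)}$ but a sketch lower bound of only $\Omega(\log |\mathcal G|) = \Omega(C)$ in bits — wait, that would be too strong, so in fact the correct calibration (matching the claimed $\Omega(C/\log C)$) is to let each of the $\Theta(C/\log C)$ terminals carry capacity $\Theta(\log C)$ spread over that many gadgets, or equivalently use $t=\Theta(C/\log C)$ terminals each incident to $\Theta(\log C)$ worth of edges; the ``$\log C$'' loss is exactly the price of encoding which of $\mathrm{poly}(C)$ gadget-slots an edge lands in. So the plan is: choose $t=\Theta(C/\log C)$, give each terminal $\Theta(\log C)$ incident unit edges arranged so that $\Theta(\log C)$ independent bits per terminal — hence $\Theta(C)$ bits total — are each readable off a single bipartition-cut query, yielding $|\mathcal G|=2^{\Theta(C)}$ and a sketch size of $\Omega(C)$ bits... and since that contradicts nothing, I'd then sanity-check against the $O(kC^2)$ upper bound and the known $\Omega(C^\varepsilon)$ bound and settle on the honest statement $\Omega(C/\log C)$, whose ``$\log C$'' reflects that the gadget-identifying information the sketch must store is $\Omega(C/\log C)$ machine words of $O(\log C)$ bits. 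Step (ii), proving the structural min-cut lemma, is routine flow-conservation bookkeeping once the gadget is fixed: show a feasible flow of the claimed base${}+z_{a,b}$ value, and a matching cut, using that all non-gadget edges of the relevant terminals are saturated. I expect step (i), pinning down the exact family so that the capacity is exactly $C$, the number of distinguishable instances is $2^{\Omega(C/\log C)}$, and each distinguishing test is a single \emph{bipartition} (not general terminal) cut — invoking Claim~\ref{clm:general-cs} to pass between the two notions — to be the main obstacle; the reduction from an explicit communication problem (e.g. \membership or Indexing on $\Omega(C/\log C)$ bits, as in the matching lower bound) is the cleanest way to package it and is the route I would write up.
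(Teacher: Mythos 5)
Your high-level framework is right — encode a long bit string in a graph so each bit is readable from a single cut query, use Claim~\ref{clm:general-cs} to pass between terminal cuts and bipartitions, and close with a counting/one-way-communication argument — but the concrete construction you sketch is on the wrong track, and the gap you yourself flag (``pinning down the exact family so that the capacity is exactly $C$\dots I expect this to be the main obstacle'') is precisely where your plan breaks down.

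The per-pair ``router'' gadget with $t/2$ terminals on each side introduces $(t/2)^2$ routers, each attached to two terminals by unit edges, so the total terminal capacity is $\Theta(t^2)$, not $\Theta(t\log t)$. With $t=\Theta(C/\log C)$ this gives capacity $\Theta(C^2/\log^2 C) \gg C$, blowing the budget; you notice this and never repair it. More fundamentally, you take $k=\Theta(C/\log C)$ terminals, which is far too many. The paper's construction uses the opposite regime: only $k-2 = k'$ terminals with $k' = O(\log C)$, and it is the \emph{subsets} of those terminals — all $N=\binom{k'}{k'/2}=2^{\Omega(k')}$ of the $(k'/2)$-size subsets $S_j$ — that index the bits. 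Each bit $\bv_j$ is stored around a single non-terminal vertex $v_j$ via a handful of unit-capacity edges (plus matching $(s,v_j)$, $(u_i,t)$, and balancing $(s,t)$ edges), and $\bv_j$ is recovered by the single terminal cut $(\{s\}\cup S_j,\{t\})$: Lemma~\ref{lem:flow-op} shows the cut value equals a fixed constant plus $\bv_j$. The arithmetic then lands exactly where you want: the total terminal capacity is $C=\Theta(k'N)$, and since $N=2^{\Theta(k')}$ we have $k'=\Theta(\log C)$, hence $N=\Omega(C/\log C)$. So the $\log C$ loss is not the cost of ``encoding which of $\mathrm{poly}(C)$ gadget-slots an edge lands in'' as you speculate; it is the factor $k'$ by which the $N$ encoded bits inflate the terminal capacity. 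Replacing your per-pair routers and your $t=\Theta(C/\log C)$ terminal count with ``few terminals, exponentially many subset-indexed gadget vertices'' is the missing idea.
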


To prove this lower bound, we show how to encode a binary vector of length $N := \Omega(C/\log{C})$ in an undirected graph $G$, so that given only a cut-preserving 
sketch for $G$, one can recover any entry of this vector. Standard information-theoretical arguments (similar to the lower bound for the \membership problem in Section~\ref{sec:matching}) 
then imply that size of the cut-preserving sketch has to be of size $\Omega(N) = \Omega(C/\log{C})$. 
We emphasize that while in the proof we assume the cut-preserving sketch has to return the value of minimum cuts between any subsets $(A,B)$ of the terminals (even when they are not a bipartition), by Claim~\ref{clm:general-cs}, 
this is without loss of generality; hence, the lower bound holds also for cut-preserving sketches that only guarantee to preserve minimum cuts for bipartitions.

\subparagraph{Construction.} Let $k' = k-2$. For simplicity, assume $k'$ is even, and let $N = {{k'}\choose{{k'\over2}}}$.
For any $N$-dimensional binary vector $\bv \in \set{0,1}^{N}$, we define a graph $G_{\bv}(V,E)$ as follows: 

\ssection{Vertices:} The set of vertices of $G_{\bv}$ is $V = \set{s,t} \union\set{q_1,\ldots, q_{k'}} \union \set{u_1,\ldots,u_{k'}} \union \set{v_1, \ldots , v_{N}}$ and the $k$ terminals are $T=\set{s,q_1,\ldots, q_{k'},t}$.

\ssection{Edges:} Let $\SS = \set{S_1,\ldots, S_N}$ be a collection
of all $({k'/2})$-size subsets of $\set{q_1,\ldots,q_{k'}}$. The set of edges are defined as:
\begin{itemize}
\item For any $i \in [k']$, there is an edge $(q_i,u_i)$ with capacity $N$.
\item For any $i \in [k']$, there is an edge $(s,u_i)$ with capacity $N$.
\item For any $j \in [N]$, there is an edge $(v_j,t)$ with capacity
  $1$.
\item \label{line:cal-parallel} A vertex $u_j$ is connected to a vertex $v_i$ with an edge of capacity $1$ iff $\bv_{i} = 1$ or $q_{j} \notin S_{i}$. Additionally, if $u_{j}$ is connected
  to $v_{i}$, there are two more edges $f_{1} = (s,v_{i})$ and $f_{2} = (u_{j},t)$ each with capacity $1$.
\item There is an edge $(s,t)$ with capacity $kN - m$, where $m$ is
  the number of edges between $\set{u_1,\ldots,u_{k'}}$ and
  $\set{v_1,\ldots,v_N}$.
\end{itemize}

To recover the vector $\bv$ from a cut-preserving sketch of $G_\bv$, we will consider the terminal cuts $(A,B)$
where $A = \set{s} \union S_{i}$ for some $S_i \in \SS$ and $B = \set{t}$. We further denote the terminal cut $(A,B)$ corresponding to picking $S_i \in \SS$ in the part $A$
by $\tc(S_i)$. We define the \emph{output profile} of a graph $G_\bv \in \FG$ to be an $N$-dimensional vector $\op(G_\bv)$ where
the $i$-th entry of $\op(G_\bv)$ is equal to the value of the terminal cut $\tc(S_i)$. We show that there is a one-to-one correspondence between the vector $\bv$ and $\op(G_\bv)$. 

\begin{lemma}\label{lem:flow-op}
  Let $\vec{1}$ be the $N$-dimensional vector of all ones. There exists a value $c$ independent of $\bv$ such that $\op(G_\bv) = \bv + c \cdot \vec{1}$.
\end{lemma}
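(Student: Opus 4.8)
The plan is to determine, for every $i\in[N]$, the exact value of the terminal cut $\tc(S_i)$ and to show it equals $\bv_i + c$ for one fixed constant $c=c(k',N)$ independent of $\bv$. Since the $i$-th coordinate of $\op(G_\bv)$ is by definition the value of $\tc(S_i)$, the identity $\op(G_\bv)=\bv+c\cdot\vec{1}$ then follows coordinatewise, with $c$ being this common offset.

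Fix $i$ and recall that the value of $\tc(S_i)$ equals the maximum flow from the contracted source $A=\set{s}\union S_i$ to the sink $t$; I would establish the claimed value by exhibiting a cut and a flow of matching value. For the cut, the natural candidate is $P=V\setminus\set{t}$ (with $v_i$ moved to the sink side when that is cheaper), and the key computational point is that the deliberately chosen capacity $kN-m$ of the edge $(s,t)$ is arranged so that, when the capacities of all cut edges are summed, the $\bv$-dependent quantities — which enter only through $m$ and through the gadget edges $f_1=(s,v_i)$ and $f_2=(u_j,t)$, whose number is controlled by $m$ — cancel against the $-m$ term, leaving a $\bv$-independent constant plus a contribution of $0$ or $1$ governed solely by $\bv_i$. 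For the flow I would route $kN-m$ units directly along $(s,t)$, one unit along each gadget path $s\to v_l\to t$, saturate the gadget edges from the $u_j$'s to $t$ using the capacity-$N$ edges out of $s$, and — precisely when $\bv_i=1$ — push one extra unit along $q_j\to u_j\to v_i\to t$ for a suitable $q_j\in S_i$; this last augmenting path is available exactly because setting $\bv_i=1$ inserts the edge $(u_j,v_i)$ with $q_j\in S_i$ feeding directly from the source, and one checks it can be routed disjointly from everything else.

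The delicate direction is optimality: showing this flow is maximum, equivalently that no cut beats the canonical one. The intended mechanism is that the capacity-$N$ edges $(q_j,u_j)$ and $(s,u_j)$ are costly enough relative to the size of a minimum cut that, in any minimum cut, essentially none of them can be severed; this constrains the side on which each $q_j$ and each $u_j$ must lie, after which the only remaining freedom is the side chosen for each $v_l$, and a one-vertex-at-a-time exchange argument — relocating a single $v_l$ across the cut never decreases its capacity below the canonical value — completes the proof. I expect this structural analysis of minimum cuts, together with the bookkeeping that makes all $\bv$-dependent terms cancel and isolates the single bit $\bv_i$, to be the main obstacle; once the per-coordinate identity $c(\tc(S_i))=\bv_i+c$ is established the lemma is immediate.
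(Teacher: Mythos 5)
Your plan --- max-flow/min-cut duality, exhibit a flow of the target value, then argue optimality --- shares the paper's overall skeleton, but the optimality step is handled very differently, and your proposed cut does not achieve the required value. The paper argues entirely on the flow side: it sends one unit along $s \to v_l \to u_j \to t$ per gadget edge $(u_j,v_l)$ plus $kN-m$ units along $(s,t)$ ($kN$ units in total), then augments by $q_j \to u_j \to v_p \to t$ in the residual graph for each $p \ne i$ to reach $(k+1)N-1$, augments once more to $v_i$ when $\bv_i = 1$ (reaching the total capacity into $t$, hence maximum), and when $\bv_i = 0$ reasons directly about the residual graph to rule out any further augmenting path to $v_i$. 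There is no cut-exchange argument in the paper.

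Your cut candidate fails concretely: $(V\setminus\set{t},\set{t})$ has capacity $N + m + (kN-m) = (k+1)N$ for every $\bv$, and moving only $v_i$ to the sink side is always strictly worse, since it removes a single capacity-$1$ edge $(v_i,t)$ from the cut while adding every $(u_j,v_i)$ edge and every parallel copy of $f_1=(s,v_i)$. So neither of your candidate cuts gives value $\bv_i + c$; when $\bv_i = 0$ the genuine min cut has a cluster structure in which $v_i$ is moved to the sink side together with the $q_j \notin S_i$ and their $u_j$'s that are adjacent to $v_i$, a multi-vertex relocation that a single-$v_l$-at-a-time exchange never discovers. (Your assertion that after accounting for the capacity-$N$ edges ``the only remaining freedom is the side chosen for each $v_l$'' overlooks that the terminals $q_j \notin S_i$ are free and can carry their $u_j$ with them.) Your flow routing is also off: $s \to v_l \to t$ together with $s \to u_j \to t$ already totals $(k+1)N$, leaving no room for a $\bv_i$-dependent extra unit; the paper instead routes $f_1$ and $f_2$ through the corresponding $(u_j,v_l)$ edge so that the $(v_l,t)$ and $(q_j,u_j)$ edges remain available for the later augmenting paths. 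The high-level plan is the same, but the mechanism by which the single bit $\bv_i$ appears as a cut or residual obstruction --- the crux of the lemma --- is not in place, and the cut-exchange route as sketched would not get there.
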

\begin{proof}
  Fix an $i \in [N]$ and consider $\tc(S_i)$. We argue that the maximum flow value from $\set{s} \union{S_i}$ to $\set{t}$ is $(k+1)N-1 + \bv_i$; the lemma then follows
  from max-flow min-cut duality and the choice of $c = (k+1)N-1$.

  In $G_\bv$, we can first send a flow of size $kN$ from $s$ to $t$ by
  sending one unit of flow along every $s \rightarrow v_{l} \rightarrow u_{j} \rightarrow t $ path for any edge
  of the form $(u_{j},v_{l})$ ($m$ units of flow in total) and $kN - m$ units of flow over the $(s,t)$ edge.
  After this process, the residual graph of $G_\bv$ becomes a directed graph where any edge of the form $(u_{j},v_{l})$ is directed from $u_j$ to $v_l$.
  Now consider any vertex $v_p$ where $p \neq i$. There exists at least one terminal $q_j \in S_i$ (in fact, in $S_i \setminus S_p$), such that there is an
  edge between $u_j$ and $v_p$ in $G_\bv$. Since in the residual graph of $G_\bv$, this edge is directed from $u_j$ to $v_p$, we can send one unit of flow over this edge
  also through the path $q_j \rightarrow u_j \rightarrow v_p \rightarrow t$. Hence, in $G_\bv$, we can always send $kN + N-1 = (k+1)N - 1$ units of flow from $\set{s} \union S_i$ to $\set{t}$.

  First suppose the $i$-th entry of $\bv$ is equal to $1$; then there is an edges from $u_j$ to $v_i$ for any $q_j \in S_i$. In particular, we can send one extra unit of flow over one of these
  edges to $t$, hence having a flow of size $(k+1)N$ entering $t$. Since the total capacity of the edges incident on $t$ is $(k+1)N$, this ensures that the max-flow is also $(k+1)N$.
  		
  Now suppose the $i$-th entry of $\bv$ is equal to $0$. For the vertex $v_i$, by construction, there is no
  edge from any $u_{j}$ to $v_{i}$, where $q_j \in S_i$. Hence in the residual graph of $G_\bv$, there is no path from $\set{s} \union S_i$ to $\set{t}$, meaning that
  the maximum flow in this case is $(k+1)N - 1$. This completes the proof.
\end{proof}

\begin{proof}[Proof of Theorem~\ref{thm:cs-lower}]
Lemma~\ref{lem:flow-op} ensures that for any graph $G_\bv$, there is a one-to-one correspondence between the value of $i$-th entry in $\bv$ and $i$-th entry in $\op(G_{\bv})$. 
Assuming that the cut-preserving sketch is able to answer each terminal cut (deterministically or even with a sufficiently small constant probability of error), we can recover
$i$-th bit of $\bv_i$, from the $i$-th index in $\op(G_\bv)$ with a constant probability. Standard information-theoretical arguments imply that the size of 
the cut-preserving has to be $\Omega(N)$. Moreover, since $N = 2^{\Omega(k)} = \Omega(C/k) = \Omega(C/\log{C})$ in this construction, we obtain the final bound of $\Omega(C/\log{C})$ bits on the sketch size.
\end{proof}

We should point out for the case of randomized cut-preserving sketches that are only guaranteed to have a constant probability of failure over bipartition cuts (and not necessarily terminal cuts), we first need to reduce the probability 
of error to $2^{-k}$ before performing the described construction (and applying Claim~\ref{clm:general-cs}) which results in a lower bound of $\Omega(C/\log^{2}{C})$.

We further point out that, as a corollary of Theorem~\ref{thm:cs-lower}, we also obtain a simple proof for a lower bound of $2^{\Omega(k)}$ on size of 
cut-preserving sketches (see~\cite{krauthgamer2013mimicking,khan2014mimicking}).

\newcommand{\eijplus}{\ensuremath{\eplus_{(i,j)}}\xspace}
\newcommand{\eijminus}{\ensuremath{\eminus_{(i,j)}}\xspace}
\newcommand{\GQall}{\ensuremath{G^{\Qall}}\xspace}
\newcommand{\ehatplus}{\ensuremath{\hat{e}^{+}}\xspace}
\newcommand{\ehatminus}{\ensuremath{\hat{e}^{-}}\xspace}

\section{The $s$-$t$ Edge-Connectivity Problem}\label{sec:st-conn}
In this section, we study dynamic sketching for the $s$-$t$ edge-connectivity problem. As it turns out, any cut-preserving sketch can be directly adapted to a dynamic sketching scheme for the $s$-$t$ edge-connectivity problem
as follows. Given a graph $G$ with a set $T$ of $k$ terminals and two designated vertices $s$ and $t$, create a cut-preserving sketch for $G$ with terminals $T \union \set{s,t}$. Note that given a query $Q$ (i.e., a set of edges 
among $T$), the $s$-$t$ minimum cut (which is equal to the $s$-$t$ edge-connectivity) will partition $T \union \set{s,t}$ into two sets $T_{s}$ and $T_{t}$, where $T_{s}$ contains $s$ and $T_{t}$ contains $t$. Hence, the minimum 
cut from $T_{s}$ to $T_{t}$ is equal to the minimum cut from $s$ to $t$. The cut-preserving sketch can answer the minimum cut from $T_{s}$ to $T_{t}$ in the original graph, and the additional cut value caused by the query is simply 
the total number of the edges from $T_{s}$ to $T_{t}$. Therefore, if we enumerate all possible partitions of the terminals that separate $s$ and $t$, and compute the 	minimum cut for each partition as above, the smallest minimum 
cut among those partitions is equal to the minimum cut from $s$ to $t$.

Nevertheless, by our lower bound on the size of cut-preserving sketches (Theorem~\ref{thm:cs-lower}), a dynamic sketching scheme constructed as above, will have a linear dependency on the total degree of the vertices
 in $T \union \set{s,t}$, which could be as large as the number of vertices in the graph. To resolve this issue, we propose a scheme which directly uses our dynamic sketching scheme for the maximum matching  problem and achieve a sketch of size $O(k^4)$.

\subsection{A Dynamic Sketching Scheme with Sketch Size $O(k^4)$}\label{sec:st-conn-matching}
We now propose a dynamic sketching scheme by creating an instance of the maximum matching problem, and show that any query for the $s$-$t$ edge-connectivity problem can be answered using a query in the maximum matching problem. The reduction is in the same spirit as the one we used for cut-preserving sketches. But note that the main differences is that for cut-preserving sketches, the set of edges in the original graph never change, unlike the case for dynamic sketching.

 \begin{theorem}
   For any $\delta > 0$, there exists a randomized $k$-dynamic sketching scheme for the $s$-$t$ edge-connectivity problem with a sketch of size $O(k^4 \log(1/\delta))$, which answers any query correctly with probability at least $1 - \delta$.
\end{theorem}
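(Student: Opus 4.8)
The plan is to reduce the $s$-$t$ edge-connectivity problem to the maximum matching problem in the dynamic sketching model, using the classical Hoffman-type correspondence between edge-disjoint paths and bipartite matchings already employed in Section~\ref{sec:sparsifier-upper}, but adapting it to the setting where new edges among terminals may be inserted at query time. First I would fix, as in Section~\ref{sec:sparsifier-upper}, the transformation of the static graph $G$ into a bipartite graph $G'$: replace every edge by a capacity's worth of parallel copies, split each such edge $e$ into \eminus$\in L$ and \eplus$\in R$ joined by an edge, and for any two edges $e_1,e_2$ of $G$ where the head of $e_2$ is the tail of $e_1$ add an edge $(\eplus_1,\eminus_2)$; matching the edge-pairs gives a base matching $M$ of size $m$, and augmenting paths of $M$ correspond exactly to edge-disjoint paths in $G$. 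The new ingredient is handling the query edges: for every \emph{possible} edge $(q_a,q_b)$ between two terminals of $G$, and for every way it could participate in an $s$-$t$ path, we must pre-create the corresponding gadget vertices in $G'$ and designate them as terminals in $G'$, so that inserting the edge in $G$ corresponds to inserting an edge in $G'$ at query time.

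The key steps, in order, are: (1) without loss of generality add $s,t$ to $T$, so $|T|=k$; (2) for each ordered pair $(q_a,q_b)$ of terminals, introduce in $G'$ a pair of gadget vertices that will represent the split of a potential new edge $q_a\to q_b$, together with connector vertices linking $q_a$'s side to \eplus-type vertices incident to $q_a$ in $G$ and $q_b$'s side to \eminus-type vertices incident to $q_b$ — there are $O(k^2)$ such potential edges, and each needs only $O(1)$ gadget vertices but its connectors touch up to $O(\din{q},\dout{q})$ vertices of $G$; since those connector \emph{edges} are present statically in $G'$ (they do not depend on the query), this is fine, and the set $T'$ of terminals of $G'$ has size $O(k^2)$; (3) define, for a query $Q\subseteq\binom{T}{2}$ together with a candidate bipartition $(T_s,T_t)$ of $T$ separating $s$ and $t$, a matching-query $Q'$ that activates exactly (a) the split-gadget edges for the edges of $Q$, oriented according to $(T_s,T_t)$, and (b) the "source/sink" edges attaching $s$ to $L$-side gadget vertices and $t$ to $R$-side gadget vertices — $Q'$ is a set of edges among $T'$ of size $\poly(k)$; (4) prove the analogue of the correctness lemma in Section~\ref{sec:sparsifier-upper}: $\nu(G'^{Q'}) - m' $ equals the number of edge-disjoint $s$-$t$ paths in $G^Q$ that respect the cut $(T_s,T_t)$, where $m'$ is the size of the base matching, by the same forward (path $\to$ vertex-disjoint augmenting path) and backward (symmetric difference of matchings $\to$ edge-disjoint paths) arguments; (5) the extraction algorithm enumerates all $2^{O(k)}$ bipartitions $(T_s,T_t)$ separating $s$ and $t$, for each issues the matching query $Q'$ to the dynamic sketch for $G'$, adds back the contribution of the $Q$-edges crossing the cut, and returns the minimum; this minimum equals the $s$-$t$ min-cut in $G^Q$, hence the $s$-$t$ edge-connectivity.

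For the size bound: $G'$ has $O(k^2)$ terminals, so by Theorem~\ref{thm:matching} a dynamic sketch for matching on $G'$ that answers a single query correctly with probability $1-\delta'$ has size $O((k^2)^2\log(1/\delta')) = O(k^4\log(1/\delta'))$. Because the extraction algorithm makes $2^{O(k)}$ matching queries and we need all of them correct, I would set $\delta' = \delta\cdot 2^{-\Theta(k)}$ so that $\log(1/\delta') = O(k + \log(1/\delta))$; then a union bound gives overall failure probability at most $\delta$, and the sketch size becomes $O(k^4(k+\log(1/\delta))) = O(k^5 + k^4\log(1/\delta))$. To get the claimed $O(k^4\log(1/\delta))$ one instead reuses the \emph{same} randomness across all queries: run the compression of Theorem~\ref{thm:matching} once to produce a sketch that, by Lemma~\ref{lem:rank-lovasz}, with probability $1-\delta$ correctly computes $\rank$ of the (single, query-independent once the random evaluation is fixed before compression) Tutte matrix of $G'^{Q'}$ for \emph{every} $Q'$ simultaneously — this is exactly how the proof of Theorem~\ref{thm:matching} already works, since the random evaluation of the static entries is fixed once and the argument shows $\rank(\hat{\bM})+r=\rank(\tM)$ holds identically as polynomials evaluated at the chosen point, and choosing $p=\Theta(n/\delta)$ bounds the single bad event — giving sketch size $O(k^4\log(1/\delta))$.

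The main obstacle I expect is step~(2)--(4): getting the gadget construction right so that a query edge $(q_a,q_b)\in Q$, which in $G$ can be traversed in \emph{either} direction depending on which side of the cut $q_a$ and $q_b$ fall, is correctly modeled in the directed bipartite reduction, and so that distinct query edges sharing an endpoint do not create spurious augmenting paths through a terminal's connector vertices; one must ensure each terminal's connector structure routes at most its true capacity and that vertex-disjointness in $G'$ faithfully encodes edge-disjointness in $G^Q$. The enumeration over bipartitions separating $s$ and $t$ is routine, and the $\poly(k)$ bookkeeping of which connector edges to activate is mechanical; the delicate part is the exact statement and proof of the reduction lemma analogous to the proof of Theorem~\ref{thm:vertex-sparsifier-upper}, now in the presence of query-inserted edges in the original graph.
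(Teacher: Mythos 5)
Your proposal diverges from the paper's construction in a way that both overcomplicates the argument and fails to reach the stated size bound. You route through an enumeration over all $2^{\Theta(k)}$ bipartitions $(T_s,T_t)$ of $T$ separating $s$ and $t$, issuing one matching query per bipartition and minimizing at the end. This mimics the cut-preserving-sketch--based reduction that the paper explicitly sets aside at the start of Section~\ref{sec:st-conn}, and it forces a union bound over exponentially many matching queries: to have all of them correct with probability $1-\delta$ you would need the matching sketch's per-query error to be $\delta' = \delta\cdot 2^{-\Theta(k)}$, which by Theorem~\ref{thm:matching} gives $O(k^4(k + \log(1/\delta))) = O(k^5 + k^4\log(1/\delta))$, not $O(k^4\log(1/\delta))$. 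Your attempt to avoid this by ``reusing randomness'' is not sound. The deterministic identity $\rank(\hat{\bM})+r=\rank(\tM)$ does hold for every query under the same evaluation, but the probabilistic ingredient --- Lemma~\ref{lem:rank-lovasz}, asserting $\rank(\tM)=2\nu$ --- concerns a \emph{different} evaluated Tutte matrix $\tM$ for each matching query (the query determines which entries of $\hat{\bA}$ are zeroed before evaluation), so the Schwartz--Zippel bad events are distinct and must be union-bounded over. Indeed, the paper pays exactly this cost where it genuinely needs many matching queries: Section~\ref{sec:sparsifier-upper} covers $3^k$ queries by choosing $\delta$ accordingly.

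The paper sidesteps the enumeration entirely, and the idea you are missing is a \emph{blocking gadget}. In the paper's construction, $G'$ is built from $\GQall$ (the graph with every possible directed terminal edge inserted), and for every potential terminal edge $e\in\Qall$ the compression adds not only the split vertices $\eminus,\eplus$ but also two isolated auxiliary \emph{terminals} $\ehatminus,\ehatplus$ with no static incident edges. For a query $Q$, the single matching query $Q'$ adds $(\eminus,\eplus)$ when $e\in Q$ (so $e$ can be traversed), and adds $(\eminus,\ehatplus)$ and $(\eplus,\ehatminus)$ when $e\notin Q$; since $\ehatplus,\ehatminus$ then have degree one and are matched, $\eminus,\eplus$ can never lie on an augmenting path, effectively deleting $e$. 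This converts the $s$-$t$ query into \emph{one} matching query on a graph with $4k^2$ terminals, giving $O(k^4\log(1/\delta))$ with no union bound. Your steps~(2)--(4) leave exactly the $e\notin Q$ case unhandled: without blocking, the split vertices of an un-queried potential terminal edge remain unmatched yet statically connected via your connector vertices, so they can act as spurious augmenting-path endpoints and inflate $\nu$. You flag this as ``the main obstacle'' but do not resolve it. A smaller point: your worry about a query edge being traversable ``in either direction depending on which side of the cut $q_a,q_b$ fall'' does not arise here; $G$ is a digraph and queries insert \emph{ordered} pairs (this is also why the paper makes $s,t$ non-terminals via a standard degree-splitting transformation rather than, as you do, absorbing them into $T$).
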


Given a digraph $G$ with two designated vertices $s$ and $t$, along with a set of $T$ terminals, recall that, for the query $\Qall$ where an edge is inserted between each (ordered) pair of terminals, \GQall denotes the graph after 
applying the query $\Qall$ to $G$. Assume $s$ does not have any incoming edge and $t$ does not have any outgoing edge, since removing them will not affect the $s$-$t$ edge-connectivity. We further assume that $s$ are $t$  are 
not terminals. This is without loss of generality since we can create two vertices $s'$ and $t'$ that are not terminals, while adding $d^+(s)$ (resp. $d^{-}(t) $) new vertices and for each of these vertices $v$, adding an edge from $s'$ 
to $v$ and $v$ to $s$ (resp. $t$ to $v$ and $v$ to $t'$). In this new graph, the $s'$-$t'$ edge connectivity is equal to the $s$-$t$ edge-connectivity in $G$ and $s',t'$ are not terminals. Consider the following dynamic sketching 
scheme.

\textbox{A dynamic sketching scheme for the $s$-$t$ edge-connectivity problem}{
\begin{enumerate}
  \item[] \textbf{Input:} A graph $G$ with $m$ edges, two designated vertices $s$ and $t$, and a set $T$ of $k$ terminals.
  \item[$\bullet$] \textbf{Compression:} Construct a  bipartite graph $G'(L,R,E')$ with a set $T'$ of terminals  as follows and create a dynamic sketch for the maximum matching problem for $G'$ and $T'$.
    \begin{enumerate}
        \item For each edge $e$ in $\GQall$, if $e$ starts with $s$, create a vertex \eplus (in $R$), if $e$ ends with $t$ create a vertex \eminus (in $L$), otherwise, create two vertices \eplus (in $R$) and \eminus (in $L$).
        \item For each edge $e$ between two terminals in $G$, create two vertices \ehatplus and \ehatminus; \ehatminus and \ehatplus, along with  \eminus and \eplus,  are terminals of $G'$.
        \item For each edge $e$ in $G$ where both \eminus and \eplus exist, there is an edge between \eminus and \eplus.
        \item For any two edges $e_1$ and $e_2$ in $\GQall$ where the tail of $e_1$ is the head of $e_2$, there is an edge between $\eplus_1$ and $\eminus_2$.
    \end{enumerate}
    \item[$\bullet$] \textbf{Extraction:} Given any query $Q$ of $G$, let $Q'$ be the query of $G'$ where
    \begin{enumerate}
        \item For each edge $e$ in $Q$, insert an edge between \eminus and \eplus.
        \item For each edge $e$ in $\Qall \setminus Q$, insert an edge between \eminus and \ehatplus, and between \eplus and \ehatminus.
	\item Let the maximum matching size of $G'^{Q'}$ be $\nu$. Output $\nu - (m + 2k^2 - \card{Q})$.

    \end{enumerate}
\end{enumerate}
}

The total number of terminals in $G'$ is $4k^2$. Hence by Theorem~\ref{thm:matching}, the sketch size is  $O(k^4 \log(1/\delta))$. In the rest of this section, we prove the correctness.

\begin{proof}
    For any query $Q$ of $G$, and the query $Q'$ of $G'$ defined in the dynamic sketching scheme, we prove that $\nu - m - 2k^2 + \card{Q} = l$, where $\nu = \nu(G'^{Q'}$, i.e, the maximum matching size of $G'^{Q'}$, and $l$ is 
    the $s$-$t$ edge-connectivity of $G^Q$. Let $M$ be the matching in $G'^{Q'}$ where for each edge $e$ in $G^{Q}$, \eminus is matched with \eplus, and for each edge $e$ in $\Qall \setminus Q$, \eminus is matched with \ehatplus, and \eplus is matched with \ehatminus. The size of $M$ is equal to $m$ (the number of edges in $G$) plus $\card{Q}$ (the number of edges in $Q$) plus $2(k^2 - \card{Q})$ (the edges in $\Qall \setminus Q$), which is equal to $m + 2k^2 - \card{Q}$.

    We first argue that if the $s$-$t$ connectivity in $G^Q$ is $l$, then $\nu(G'^{Q'})\geq \card{M} + l$. This can be proved using the same argument as is used for cut-preserving sketches. Namely, there are $l$ edge disjoint-path paths from $s$ to $t$, which can be converted into $l$ vertex-disjoint augmenting paths of the matching $M$. We omit the details.

    We now show that if the size of a maximum matching $\mstar$ of $G'^{Q'}$ is $\nu$, then $l \ge \nu - \card{M}$. Then, combined with the previous case that $\nu \ge \card{M} + l$, we get $\nu = \card{M} + l$.  Denote $\nu = \card{M} + l'$ for some $l' \ge 0$; we will show that $l \ge l'$. The symmetric difference between $M$ and $\mstar$ forms a graph with $l'$ augmenting paths of the matching $M$.  We will show that these $l'$ augmenting paths can be converted into $l'$ edge-disjoint paths from $s$ to $t$ in $G^Q$. Each augmenting path must start and end with a vertex that is not matched in $M$, where the vertices that are not matched in $M$ can be partitioned into three types: $(i)$ $\eplus$ for outgoing edges of $s$, $(ii)$ \eminus for incoming edges of $t$, and $(iii)$ \ehatplus, \ehatminus for edges in $Q$. Since the Type~$(iii)$ vertices do not have any edge incident on them, they cannot be endpoints of the augmenting paths. Hence, the endpoints of the augmenting paths must be Type~$(i) $ or Type~$(ii)$. Since all Type~$(i)$ vertices are in $R$ and all Type~$(ii)$ vertices are in $L$, each augmenting paths must have one Type~$(i)$ endpoint and one Type~$(ii)$ endpoint (i.e., they start with $s$ and end with $t$).  Furthermore, since the degree of each \ehatplus, \ehatminus vertex is at most $1$, and they are not endpoints of the augmenting paths, none of them can appear in any augmenting path. Therefore, each augmenting path is of form $(\eplus_1, \eminus_2, \eplus_2, \eminus_3, \ldots, \eminus_j)$ where $e_1$ is an outgoing edge of $s$ and $e_j$ is an incoming edge of $t$. Therefore, the $l'$ augmenting paths form $l'$ edge-disjoint paths from $s$ to $t$, and $l \ge l'$.
\end{proof}

\subsection{Cut-Preserving Sketches and the $s$-$t$ Maximum Flow Problem}\label{sec:max-flow}
We have already established that any cut-preserving sketch can be adapted to a $k$-dynamic sketching scheme for the $s$-$t$ edge connectivity problem, but the size of the sketch will be $\tilde{\Omega}(C)$ which is not polynomial in $k$. However, directly constructing a scheme for edge-connectivity using our scheme for matchings achieves a sketch of size $\poly(k)$. We further explore this ``inefficiency'' of using cut-preserving sketches. As it turns out, there is an equivalence between cut-preserving sketches and dynamic sketching schemes for the $s$-$t$ \emph{maximum flow} problem.

The $s$-$t$ maximum flow problem is the capacitated version of the $s$-$t$ edge connectivity problem. We consider the case when the capacities of the edges between the terminals are fixed and given to the compression algorithm, and a query only reveals  which subset of these edges to be inserted to the graph. In this case, one can simply make the graph uncapacitated by replacing any edge $e = (u,v)$, whose capacity is $c_e$, with $c_e$ uncapacitated two-hop paths from $u$ to $v$. It is then straightforward to obtain a sketching scheme for the $s$-$t$ maximum flow problem using the sketch for the $s$-$t$ edge connectivity problem. Note that the size of the resulting sketch depends on $C$ and hence is not compact according to our definition. We establish the following theorem.
\begin{theorem}\label{thm:equivalence}
  Any cut-preserving sketch can be adapted to a dynamic sketching scheme for the $s$-$t$ maximum flow problem while increasing the number of terminals by at most $2$, and vice versa.
\end{theorem}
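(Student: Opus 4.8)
The plan is to prove the two directions separately, both by reductions that are essentially local modifications of the input graph, absorbing the extra complexity into at most two new terminals.

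For the forward direction (cut-preserving sketch $\Rightarrow$ $s$-$t$ max-flow sketch), I would argue as follows. Given an instance of the $s$-$t$ maximum flow problem in the dynamic sketching model — a capacitated graph $G$, terminals $T$, designated vertices $s,t$, and fixed capacities on the edges among $T$ whose presence is toggled by a query — first make $G$ uncapacitated by replacing each edge $e=(u,v)$ of capacity $c_e$ with $c_e$ two-hop paths $u \to x_e^{(i)} \to v$, as the paragraph preceding the theorem already describes; this preserves all $s$-$t$ flow values. Then add $s$ and $t$ to the terminal set (this is the ``$+2$'' in the statement — if $s,t$ are already terminals, nothing is added) and build a cut-preserving sketch for the resulting graph with terminal set $T \cup \{s,t\}$. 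To answer a max-flow query $Q$: the query specifies which edges among the original terminals $T$ are inserted, each carrying its fixed capacity; after the same uncapacitated replacement these become a fixed set of two-hop paths, so conditioned on $Q$ the $s$-$t$ max-flow is the value of a specific terminal cut in a fixed graph. Concretely, for each bipartition $(T_s, T_t)$ of $T \cup \{s,t\}$ with $s \in T_s$, $t \in T_t$, query the cut-preserving sketch for the min-cut value of that bipartition in $G$ (the static part), and add to it the total capacity of the query edges crossing from $T_s$ to $T_t$ (which the extraction algorithm knows, since $Q$ and the fixed capacities are given at query time); by max-flow/min-cut duality the minimum of these over all valid bipartitions equals the $s$-$t$ max-flow of $G^Q$. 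This is exactly the mechanism already used in Section~\ref{sec:st-conn} to turn a cut-preserving sketch into an edge-connectivity sketch, now tracking capacities.

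For the reverse direction ($s$-$t$ max-flow sketch $\Rightarrow$ cut-preserving sketch), I would use a classic gadget: given a $k$-terminal graph $G$ for which we want a cut-preserving sketch, add two new vertices $s$ and $t$ (these are the two extra terminals), and for a terminal cut query $(A,B)$ we want $s$ to play the role of the ``source side'' $A$ and $t$ the ``sink side'' $B$. Since $A$ and $B$ are not known at compression time, the standard trick is to connect $s$ to every terminal $q_i$ by an edge of capacity equal to $c^+(q_i)$ (the total capacity incident to $q_i$ in $G$, an upper bound sufficient to never be the bottleneck), and likewise connect every terminal to $t$ by an edge of capacity $c^-(q_i)$ — but make all $2k$ of these edges \emph{dynamic}, i.e. edges among the terminal set $T \cup \{s,t\}$ whose insertion is controlled by the query. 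Then the query corresponding to terminal cut $(A,B)$ inserts exactly the $s$–$q_i$ edges for $q_i \in A$ and the $q_j$–$t$ edges for $q_j \in B$; the $s$-$t$ max-flow in this queried graph equals the value of the min cut separating $A$ from $B$ in $G$, provided the capacities of the auxiliary edges are chosen large enough that an optimal cut never severs them (capacity $C$ on each suffices, or the per-terminal bounds above). Feeding this max-flow instance to the dynamic sketching scheme for $s$-$t$ max-flow and enumerating all queries $(A,B)$ recovers every terminal cut value, hence (via Claim~\ref{clm:general-cs}) a cut-preserving sketch.

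The routine parts are the max-flow/min-cut bookkeeping and verifying that the auxiliary edges are never cut; I would state the capacity choice explicitly and note the one-line duality argument. The main obstacle — really the only subtlety worth care — is handling the ``fixed capacities on dynamic edges'' convention consistently in both directions: in the forward direction the query edges among $T$ carry prescribed capacities that the cut-preserving sketch's extraction side must add back in by hand, and in the reverse direction we must be sure the model for $s$-$t$ max-flow sketches indeed allows the $2k$ auxiliary edges to be specified with their capacities at compression time while toggled by the query (which is exactly the model set up in the paragraph before the theorem). Once that convention is pinned down, both reductions are immediate, and the ``$+2$ terminals'' claim is visibly tight: each direction introduces precisely the pair $\{s,t\}$ and nothing else.
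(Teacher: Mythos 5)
Your proposal is correct and takes essentially the same route as the paper: the forward direction reuses the bipartition-enumeration argument from the edge-connectivity reduction (adding back the capacities of crossing query edges), and the reverse direction is exactly the paper's ``$s$-$t$ counterpart'' construction with the $2k$ auxiliary $s$-$q_i$ and $q_j$-$t$ edges made dynamic with capacities $c^+(q_i)$ and $c^-(q_j)$, so that a terminal cut $(A,B)$ is recovered by the max-flow query that inserts only the edges touching $A$ on the source side and $B$ on the sink side. The uncapacitated two-hop replacement you invoke in the forward direction is harmless but unnecessary, since a cut-preserving sketch already handles capacities directly.
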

\begin{proof}
First of all, the argument we used for showing that cut-preserving sketches give dynamic sketching schemes for $s$-$t$ edge-connectivity (while increase the number of terminals by $2$) directly  works for max-flow. In the following, we show how to use a scheme for max-flow to create a  cut-preserving sketch.

We define the \emph{$\stext$} of a graph $G$ with a set $T$ of terminals to be a graph $H$ obtained by adding two
new vertices $s$ and $t$ to $G$ and connecting $s$ to each terminal $q \in T$ with an edge of capacity $c^{+}(q)$ (i.e, the total capacity of the outgoing edges of $q$) and connecting $q$ to $t$ with an edge of capacity $c^{-}(q)$ (i.e., the total capacity of the incoming edges of $q$). We establish the following simple Lemma.
\begin{lemma}\label{prop:cs-sketch}
  For any graph $G$ with a set $T$ of $k$ terminals; Let $H$ be the $\stext$ of $G$; then, any $k$-dynamic sketching scheme for the $s$-$t$ maximum flow problem for the graph $H$ with terminals $T \union
  \set{s,t}$ can be used as a cut-preserving sketch for $G$ with terminals $T$.
\end{lemma}
\begin{proof}
  Suppose $\Gamma$ is the $k$-dynamic sketch for max-flow for the graph $H$ and the terminals $T \union
  \set{s,t}$.  We show how to use $\Gamma$ as a cut-preserving sketch for the original graph $G$ and the terminals $T$.  For any $A,B \subseteq T$, let $Q$ be the query in which $s$ is only connected to the terminals $q_{i} \in A$ (with edges of capacity $c^{+}(q_i)$) and only the terminals $q_{j} \in B$ are connected to $t$ (with edges of capacity $c^{-}(q_j)$). By the min-cut max-flow duality, the $s$-$t$ maximum flow in $H^{Q}$ is equal to the minimum cut from $s$ to $t$, which in turn is equal to the minimum cut from $A$ to $B$ in $G$.
\end{proof}
\end{proof}

We point out that Theorem~\ref{thm:equivalence} combined with Theorem~\ref{thm:cs-lower}, proves a similar $2^{\Omega(k)}$ lower bound on size of dynamic sketches for the $s$-$t$ maximum flow problem. In 
other words, this problem does not admit a compact dynamic sketch.

\section{The Minimum Spanning Tree Problem}\label{sec:mst}
In this section, we provide a $k$-dynamic sketching scheme for the 
minimum spanning tree problem, proving the following theorem. 
At the end of this section, we discuss how to extend the sketch to work for the \emph{minimum spanning forest} problem when $G$ is not connected.

\begin{theorem}\label{thm:mst}
  There exists a deterministic $k$-dynamic sketching scheme for the
  minimum spanning tree problem with a sketch of size $O(k)$.
\end{theorem}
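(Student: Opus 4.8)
\noindent The plan is to store a heavily compressed minimum spanning tree of the static part together with a single accumulated number, exploiting that queries only touch the terminals. The starting point is the standard fact that it is enough to remember \emph{any} fixed minimum spanning tree $H := \mathrm{MST}(G)$: for every query $Q$, the weight of a minimum spanning tree of $G^{Q}$ equals the weight of a minimum spanning tree of $H \cup Q$. To see this, take an MST $T^{*}$ of $G^{Q}$; if it uses an edge $e \in E(G) \setminus E(H)$, then every other edge of the fundamental cycle of $e$ with respect to $H$ lies in $H$ and is no heavier than $e$ (since $H$ is an MST of $G$), and that cycle is present in $G^{Q}$, so we may swap $e$ out for one such cycle edge crossing the cut $T^{*} \setminus e$ without increasing the weight; iterating produces an MST of $G^{Q}$ using only edges of $E(H) \cup Q$. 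Since $H$ has $\Omega(n)$ edges, the real task is to shrink it to $O(k)$ edges while retaining enough information to answer \emph{every} query.

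I would compress $H$ by repeatedly eliminating low-degree non-terminals, maintaining a weighted tree $\widehat{H}$ (keeping the terminals, and keeping the remaining surviving vertices as anonymous Steiner points) together with a counter $W_{0}$ initialised to $0$. If a non-terminal $v$ is currently a leaf with incident edge $e$, add $w(e)$ to $W_{0}$ and delete $v$. If a non-terminal $v$ has degree exactly $2$ with incident edges $e, e'$ where $w(e) \le w(e')$, add $w(e)$ to $W_{0}$ and contract $e$; inside a tree this just splices $e$ and $e'$ into a single edge of weight $w(e')$ joining the two neighbours of $v$ and deletes $v$, changing no other vertex's degree. Repeat until every non-terminal of $\widehat{H}$ has degree at least $3$; this terminates because each step removes one vertex. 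The sketch is the pair $(\widehat{H}, W_{0})$. For the size bound, every leaf of the final $\widehat{H}$ is a terminal, since a non-terminal leaf would have been deleted; hence $\widehat{H}$ has at most $k$ leaves, and because any tree with $L$ leaves has at most $L - 2$ vertices of degree at least $3$, it has at most $k - 2$ non-terminal vertices, so at most $2k - 2 = O(k)$ vertices and $O(k)$ edges, each carrying an $O(\log n)$-bit weight: $O(k)$ machine words in all.

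For correctness I would show that each elimination step preserves, simultaneously over all queries $Q$, the running invariant that the MST weight of $H^{Q}$ equals $W_{0}$ plus the MST weight obtained by applying $Q$ to the current tree. The key observation is that when a non-terminal $v$ is eliminated, its degree and its incident edges in $H^{Q}$ are exactly as in $H$, because a query only inserts edges among terminals. If $v$ is a non-terminal leaf, then $e$ is the unique edge crossing the cut $\{v\}$ in $H^{Q}$, so $e$ lies in every spanning tree of $H^{Q}$, and the MST weight of $H^{Q}$ equals $w(e)$ plus the MST weight of $(H/e)^{Q}$ (which coincides with $H^{Q}/e$, since contraction of a present edge commutes with applying $Q$). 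If $v$ has degree $2$, then every cycle of $H^{Q}$ containing $e$ passes through $v$ and hence also through $e'$, so $e$ is never the \emph{unique} heaviest edge of a cycle; consequently $e$ lies in some MST of $H^{Q}$, and by the matroid contraction property the MST weight of $H^{Q}$ again equals $w(e)$ plus the MST weight of $(H/e)^{Q}$. Chaining these identities over all elimination steps, and combining with the first paragraph, gives that the MST weight of $G^{Q}$ equals $W_{0}$ plus the MST weight of $\widehat{H}^{Q}$ for every $Q$. The extraction algorithm therefore just forms the $O(k)$-vertex graph $\widehat{H}^{Q}$, computes its MST weight, and adds $W_{0}$; the whole scheme is deterministic.

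The point I expect to need the most care is exactly the last one: that the degree-$2$ contraction is valid \emph{uniformly} over all of the exponentially many possible queries. This is precisely where restricting eliminations to non-terminal vertices is essential, since then no query can change the local cycle or cut structure around an eliminated vertex, and the cut rule and the matroid contraction identity apply to $H^{Q}$ verbatim, for all queries at once; ties among equal edge weights need no special handling because these arguments only ever use ``not the \emph{unique} heaviest''. Finally, when $G$ is disconnected one runs the identical construction on a minimum spanning forest and additionally records the partition of the terminals into components, which extends the scheme to the minimum spanning forest problem; a query edge joining two previously separate components is simply treated as a new edge of the compressed graph.
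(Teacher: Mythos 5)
Your compression algorithm is the same as the paper's, just described one vertex at a time: deleting non-terminal leaves and splicing out degree-$2$ non-terminals, keeping the heavier of the two incident edges and accumulating the discarded weight, is exactly the paper's operation of collapsing each maximal ``clean'' path to a single edge carrying the path's maximum weight and storing the aggregate weight difference $w^*$; the resulting pair $(\widehat{H},W_0)$ coincides with the paper's $(H',w^*)$, and the $O(k)$ size count is the same.

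Where you genuinely diverge is in the correctness argument. The paper fixes a particular incremental MST procedure (\mstalg), runs it in parallel on $H$ and $H'$ with the same query, and proves by induction over query edges that every edge of the compressed tree remains a ``summary'' of an edge-disjoint path of the uncompressed tree with the same maximum weight (Lemma~\ref{lem:mst-property}); correctness then follows because the two runs always delete equal-weight edges. You instead argue at the level of the static compression itself: by induction over elimination steps, you maintain the invariant that $\mathrm{MST}(G^Q)=W_0 + \mathrm{MST}(\widehat{H}^Q)$ uniformly over all queries $Q$, using the cut rule for leaf removal and the cycle rule together with the matroid-contraction identity for degree-$2$ splicing, and relying on the fact that contraction of a non-terminal-incident edge commutes with applying $Q$. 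Your route avoids committing to a particular MST algorithm and, as you observe, only ever invokes ``not the unique heaviest edge of a cycle,'' so it needs no tie-breaking assumption, whereas the paper explicitly assumes distinct weights. Both proofs are valid; yours is a somewhat more self-contained application of standard MST exchange properties, while the paper's summary-invariant formulation more directly tracks which edges survive. The extension to forests is handled the same way in both.
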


Note that we can, without loss of generality, assume that all edges have distinct weights; otherwise, we can use a deterministic tie-breaking rule when
comparing edges with the same weight. Consequently, for any query $Q$, the minimum spanning tree of $G^{Q}$ is unique. 

\subparagraph{Notation.} For any two vertices $s$ and $t$
in a spanning tree $H$ of a graph, we denote by $P^H(s,t)$ the unique path between $s$ and
$t$. When the underlying spanning tree is clear from the context, we will drop the
superscript $H$. We say that $P(s,t)$ is a \emph{clean} path,
if it only consists of non-terminals  with degree equal to 
two (excluding $s$ and $t$).  A clean path which is not contained in
any other clean paths is considered \emph{maximal}.

Our scheme for the minimum spanning tree problem is as follows. 

\textbox{A dynamic sketching scheme for the minimum spanning tree problem}{
\begin{enumerate}
	\item[] \textbf{Input:} An undirected weighted graph $G(V,E)$, with a set $T$ of $k$ terminals. 
	\item[$\bullet$] \textbf{Compression:} 
	\begin{enumerate}
 		\item \label{line:first-line} Let $H$ be the minimum spanning tree of $G$. 
 		\item \label{line:prune} Perform the following two steps to prune
   		$H$.
 		\begin{enumerate}
			  \item \label{line:prune-1} While there is a non-terminal leaf $v$ in  $H$, remove $v$ along with the edges incident on $v$.
			  \item \label{line:prune-2} While there is a maximal clean path
			    $P(s,t)$ in $H$ with at least $2$ edges, replace the whole path
			    $P(s,t)$ with an edge $e = (s,t)$ whose weight is defined as the
			    maximum edge weight in $P(s,t)$.
		 \end{enumerate}
 	\item Let $H'$ be the resulting tree. Store $H'$ together with the
   	difference between the total weights of $H$ and $H'$, denoted by
   	$w^*$ as the sketch $\Gamma$.
	\end{enumerate}
	\item[$\bullet$] \textbf{Extraction:}
	\begin{enumerate}
 		\item Given a query $Q$, let $H'^{Q}$ be the graph consisting of
 		  $H'$ and the edges in $Q$.  Note that $H'$ contains all $k$ terminals.
 		 \item Return $w^*$ added to the weight of the minimum spanning tree of
 		  $H'^{Q}$ .
	\end{enumerate}
\end{enumerate}
}

\begin{lemma}\label{lem:mst-size}
  Given a graph with a set of $k$ terminals, the sketch
  created by the compression algorithm requires storage of $O(k)$.
\end{lemma}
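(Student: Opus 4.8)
The plan is to argue that after the pruning in Steps~\ref{line:prune-1} and~\ref{line:prune-2}, the tree $H'$ has only $O(k)$ vertices (and hence $O(k)$ edges, since it is a tree), and then storing $H'$ plus the single number $w^*$ clearly takes $O(k)$ words. The key structural fact to establish is: \emph{every leaf of $H'$ is a terminal, and $H'$ has no non-terminal vertex of degree exactly two.} The first property holds because Step~\ref{line:prune-1} is repeated until no non-terminal leaf remains — and one should note that removing a leaf can only decrease other vertices' degrees, so no new non-terminal leaf that is "blocked" can persist; the loop terminates with all leaves being terminals. The second property holds because Step~\ref{line:prune-2} contracts every maximal clean path of length $\ge 2$ into a single edge, and a clean path is by definition a path of degree-two non-terminals; after this contraction no such vertex survives. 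One subtlety to check: Step~\ref{line:prune-2} does not reintroduce a non-terminal leaf (contracting an internal path leaves the endpoints $s,t$ with the same or only slightly reduced degree, never creating a degree-zero or newly-exposed non-terminal leaf), so the two steps do not interfere and the final $H'$ simultaneously satisfies both properties.

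Given these two properties, I would bound $|V(H')|$ by a standard counting argument on trees. Let $L$ be the set of leaves of $H'$, let $D_2$ be the set of non-terminal degree-two vertices, and let the remaining vertices be "branch" vertices of degree $\ge 3$. By the first property $L \subseteq T$, so $|L| \le k$. By the second property $D_2 = \emptyset$, so every non-leaf vertex that is a non-terminal has degree $\ge 3$. In any tree, the number of vertices of degree $\ge 3$ is at most the number of leaves minus one (this follows from $\sum_v (\deg(v) - 2) = -2$ over a tree, i.e. $\sum_{\deg \ge 3}(\deg(v)-2) = |L| - 2 \ge |\{v : \deg(v)\ge 3\}|$ when there are no degree-two vertices). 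Hence the number of non-terminal non-leaf vertices is at most $k-1$. The internal vertices that are terminals number at most $k$. Summing: $|V(H')| \le |L| + (\text{internal terminals}) + (\text{branch vertices}) \le k + k + (k-1) = O(k)$.

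Finally, I would wrap up: a tree on $O(k)$ vertices has $O(k)$ edges, each edge stores a weight (one machine word) and its two endpoints (two machine words), and $w^*$ is one machine word; total storage $O(k)$ words, as claimed.

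The main obstacle I expect is not the counting — that is routine — but the care needed in verifying that the two pruning loops together terminate in a state where \emph{both} invariants hold, in particular that Step~\ref{line:prune-2}'s contractions never create a fresh non-terminal leaf that Step~\ref{line:prune-1} would have needed to remove. A clean way to handle this is to observe that Step~\ref{line:prune-1} strictly decreases the vertex count each iteration and terminates; then running Step~\ref{line:prune-2} only contracts internal structure and does not change the leaf set of the tree at all (the endpoints of a contracted maximal clean path were already non-leaves, having degree $\ge 2$ within $H$ restricted appropriately, or were already themselves terminals/branch points). So the "all leaves are terminals" invariant established after Step~\ref{line:prune-1} is preserved verbatim through Step~\ref{line:prune-2}. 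I would spell out this last point carefully since it is the only place the argument could slip.
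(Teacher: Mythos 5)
Your proof is correct and follows essentially the same approach as the paper's: bound the number of leaves of $H'$ by $k$ (all leaves are terminals), bound the number of degree-two vertices by $k$ (all are terminals), and then invoke the standard tree fact that vertices of degree at least three are outnumbered by leaves, yielding $O(k)$ vertices total. The paper states these bounds without elaboration; your write-up merely spells out the handshake-lemma counting and the (easy, but worth noting) invariant that the clean-path contractions preserve endpoint degrees and hence do not create new non-terminal leaves.
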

\begin{proof}
  It suffices to show that after compression, $H'$ has at most $O(k)$
  vertices. By construction, the tree $H'$ has at most $k$ leaf nodes
  (only terminals can be leaves) and at most $k$ vertices whose degree
  is equal to two (removing maximal clean paths ensures only a
  terminal can have degree equal to two).  This implies that there can
  be only $O(k)$ vertices in $H'$.
\end{proof}

To analyze the correctness of the sketch, we assume the following
algorithm is used for finding the minimum spanning tree in
the extraction algorithm. \mstalg\ takes a tree $H$ and a set of edges (a
query) $Q$ as input, and output the minimum spanning tree of the
graph consisting of both $H$ and $Q$. Note that in general, any
algorithm for finding minimum spanning tree can be used by the extraction algorithm
and we use \mstalg\ only for the sake of argument.

\textbox{{\mstalg$(H,Q)$}.}
{
\begin{enumerate}
    \item Let $H_0 = H$.
    \item {For} {$i=1$ to $\card{Q}$} {do}
     \begin{enumerate}[(a)]
     \item Let $e_i$ be the $i$'th edge in $Q$
      \item \label{line:remove} Add $e_i$ to $H_{i-1}$ and from the unique cycle created, remove the edge with the maximum weight. 
      \item Let this new tree be $H_i$.
     \end{enumerate}
    \item {Return} $H_{\card{Q}}$.
    \end{enumerate}
  }

It is an easy exercise to show that $\mstalg$ indeed computes a
minimum spanning tree of the graph consisting of both $H$ and $Q$.
Assume $H$ is the minimum spanning tree of the initial graph $G$; in
the following, we will argue that running $\mstalg$ on both $H$ and
$H'$ with the same query $Q$ will result in edges with the same
weight being removed in line (\ref{line:remove}). Therefore, the
difference between the weights of the two minimum spanning trees
returned by $\mstalg$ is still $w^*$.

We further define the following notation. Given a query $Q$, if we
run $\mstalg$ with input $H$ (the minimum spanning tree of $G$) and
$H'$ (the tree stored in the sketch) with query $Q$ in parallel,
denote by $H_{i}$ (resp. $H'_{i}$) the tree created in $\mstalg$ with
input $H$ (resp.  $H'$) after iterating the $i$-th edge of
$Q$. Furthermore, let $H_0 = H$ and $H'_0 = H'$. For any pair of $H_i$
and $H_i'$, we say an edge $e$ in $H_i'$ is a \emph{summary} of the
path $P^{H_i}(s,t)$ in $H_i$ iff the weight of $e$ in $H_i'$ is equal to the
maximum edge weight in path $P^{H_i}(s,t)$ in $H_i$.

\begin{lemma}\label{lem:mst-property}
Let $\ell = \card{Q}$; for any $i \in [\ell]$, and every edge $e = (u,v) \in H'_i$,
$e$ is a summary of the path $P^{H_i}(u,v)$ in
$H_i$. Moreover, any two edges in $H'_i$ are summaries 
of two \emph{edge disjoint paths} in $H_i$.
\end{lemma}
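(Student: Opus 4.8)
The plan is to prove Lemma~\ref{lem:mst-property} by induction on $i$, the number of query edges processed so far. For the base case $i=0$, we have $H'_0 = H'$, which was obtained from $H_0 = H$ by the two pruning operations of the compression algorithm; one checks directly that each edge of $H'$ is the summary of the maximal clean path it replaced (or of the single original edge it equals, if no replacement happened), and that the paths these edges summarize are vertex-disjoint except possibly at shared endpoints, hence edge-disjoint. This is essentially the definition of the pruning step in lines~(\ref{line:prune-1})--(\ref{line:prune-2}), combined with the observation that distinct maximal clean paths can only overlap at terminals or branch vertices.

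For the inductive step, assume the claim holds for $H_{i-1}$ and $H'_{i-1}$, and let $e_i = (a,b)$ be the $i$-th query edge. In $\mstalg(H',Q)$ we add $e_i$ to $H'_{i-1}$, forming a unique cycle $C'$, and remove its maximum-weight edge $f'$; in $\mstalg(H,Q)$ we add $e_i$ to $H_{i-1}$, forming a unique cycle $C$, and remove its maximum-weight edge $f$. The key step is to show that $f'$ is a summary of a sub-path of $C$ that contains $f$, and that $f$ has the same weight as $f'$. To see this, note that the cycle $C'$ in $H'_{i-1}$ corresponds, edge-by-edge via the summary relation, to a closed walk in $H_{i-1}$ that traverses the edge-disjoint paths summarized by the edges of $C' \setminus \{e_i\}$ together with $e_i$ itself; since those paths are edge-disjoint, this closed walk is in fact the simple cycle $C$. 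The maximum-weight edge $f$ of $C$ then lies on the path summarized by whichever edge $f'$ of $C'$ has weight equal to the max edge weight on its summarized path, which is exactly the maximum-weight edge of $C'$; so $w(f) = w(f')$, and removing $f$ from $C$ versus $f'$ from $C'$ keeps the summary/edge-disjointness structure intact for $H_i$ and $H'_i$. One must also verify that edges of $H'_{i-1}$ not on $C'$ are unaffected and remain summaries of their (now possibly relabeled, but still the same) paths in $H_i$, and that edge-disjointness across all of $H'_i$ is preserved --- this follows because we only deleted one edge from each side and added $e_i$ (which summarizes the trivial path consisting of itself) to each side.

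The main obstacle I expect is the careful bookkeeping in showing that the closed walk in $H_{i-1}$ obtained by concatenating the edge-disjoint summarized paths is actually a \emph{simple} cycle (not just a closed walk), and that its maximum-weight edge corresponds precisely to the maximum-weight edge of $C'$. Edge-disjointness of the summarized paths is what makes the walk a union of internally-disjoint arcs between the ``junction'' vertices, but one still has to argue no vertex is repeated; this should follow from the fact that $C'$ is a simple cycle in the tree-plus-one-edge graph $H'_{i-1} + e_i$ and the summary paths meet only at the endpoints dictated by the tree structure of $H_{i-1}$. Once this correspondence is established, the equality of removed weights and the propagation of the summary/disjointness invariant are routine. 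Note that Lemma~\ref{lem:mst-property} immediately yields the correctness of the scheme: applying it with $i = \ell$ shows that the minimum spanning tree of $H'^Q$ differs in total weight from that of $G^Q$ by exactly $w^*$, since at every step $\mstalg$ removes edges of equal weight on both sides.
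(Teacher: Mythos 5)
Your proposal is correct and follows essentially the same line as the paper's proof: induction on $i$, with the base case resting on the edge-disjointness of maximal clean paths and the inductive step reducing to the observation that the edge-disjoint paths summarized by $P^{H'_{j}}(u,v)$ concatenate to the simple path $P^{H_{j}}(u,v)$, so the two cycles created by adding $e_{j+1}$ have equal maximum edge weights. The paper asserts without proof the fact you flag as the main obstacle (that the concatenated walk is actually the simple tree path); your extra care there is reasonable but not a different approach, and a quick way to discharge it is to note that a walk in a tree that uses each edge at most once is necessarily simple.
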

\begin{proof}
We prove the lemma by induction on $i$. 
In the compression algorithm, if a
maximal clean path is replaced by an edge $e$ in line
(2b), $e$ will never be  replaced.  Thus any edge
$e$ in $H'_0 ( = H')$ is either an edge in $H$ (the minimum spanning
tree of $G$), or a summary of the maximal clean path replaced by
$e$. Since the maximal clean paths are edge disjoint by definition, the
lemma holds for $i=0$.

Suppose the properties hold for $i=j$, and consider $i = j + 1$.  Let
$e_{j+1} = (u, v)$. Since $H'_j$ is connected, there is a path from
$u$ to $v$ in $H'_{j}$, $P^{H'_j}(u,v)$. By induction, the edges in
$P^{H'_j}(u,v)$ are summaries of edge disjoint paths in $H_j$. These
edge disjoint paths form the simple path $P^{H_j}(u,v)$
together. Hence, for the two cycles respectively formed by adding
$e_{j+1}$ to $H'_j$ and $H_j$, the largest weights of the edges are
the same. Since edge weights are unique, either $e_{j+1}$ is missing
in both $H'_{j+1}$ and $H_{j+1}$, or an edge $(u', v')$ is missing in
$H'_{j+1}$ and an edge with the same weight in $P^{H_j}(u', v')$ is
missing in $H_{j+1}$. For the later case, the newly added edge
$e_{j+1}$ trivially satisfies the two properties, and any edge
remained in $H'_{j}$ is a summary of the same path in both $H'_j$ and
$H'_{j+1}$. 
\end{proof}

\begin{lemma}\label{lem:mst-correct}
For any query $Q$, the extraction algorithm returns the weight of the minimum spanning tree of $G^{Q}$.
\end{lemma}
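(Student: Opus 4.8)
\textbf{Proof proposal for Lemma~\ref{lem:mst-correct}.}

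The plan is to combine Lemma~\ref{lem:mst-property} with the correctness of \mstalg\ to show that the weight difference between the minimum spanning tree of $G^Q$ and that of $H'^Q$ is exactly the stored value $w^*$. First I would observe that, since \mstalg\ correctly computes a minimum spanning tree of the graph consisting of both its tree-input and its query-input, running \mstalg$(H,Q)$ yields the minimum spanning tree of $G^Q$ (recall $H$ is the MST of $G$, and adding $Q$ to $G$ does not change the MST compared to adding $Q$ to $H$, a standard fact stated just before Lemma~\ref{lem:mst-property}), while running \mstalg$(H',Q)$ yields the minimum spanning tree of $H'^Q$, which is what the extraction algorithm computes.

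Next I would run the two executions \mstalg$(H,Q)$ and \mstalg$(H',Q)$ in parallel, producing the sequences $H_0,H_1,\ldots,H_\ell$ and $H'_0,H'_1,\ldots,H'_\ell$ with $\ell = \card{Q}$. The key claim is that at every iteration $i$, the edge removed from $H_{i-1}$ in line~(\ref{line:remove}) has the same weight as the edge removed from $H'_{i-1}$. This follows directly from Lemma~\ref{lem:mst-property}: when edge $e_i = (u,v)$ is added, the cycle it closes in $H'_{i-1}$ consists of $e_i$ together with the path $P^{H'_{i-1}}(u,v)$, whose edges are summaries of edge-disjoint subpaths of $H_{i-1}$ that concatenate to $P^{H_{i-1}}(u,v)$; hence the maximum edge weight along the cycle in $H'_{i-1}$ equals the maximum edge weight along the cycle in $H_{i-1}$ (both equal $\max(w(e_i),\, \max\text{-weight on }P^{H_{i-1}}(u,v))$). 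Since edge weights are distinct, either $e_i$ itself is dropped in both runs, or the dropped edges have identical weight. Therefore $\mathrm{weight}(H_i) - \mathrm{weight}(H'_i)$ is invariant across $i$.

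Finally I would close the induction: at $i=0$ the difference $\mathrm{weight}(H_0) - \mathrm{weight}(H'_0) = \mathrm{weight}(H) - \mathrm{weight}(H') = w^*$ by the definition of $w^*$ in the compression step, and by the invariance just established $\mathrm{weight}(H_\ell) - \mathrm{weight}(H'_\ell) = w^*$ as well. Since $H_\ell$ is the MST of $G^Q$ and $H'_\ell$ is the MST of $H'^Q$, the extraction algorithm's output $w^* + \mathrm{weight}(H'_\ell)$ equals $\mathrm{weight}(H_\ell)$, the weight of the minimum spanning tree of $G^Q$. The main obstacle is making the parallel-execution bookkeeping fully rigorous — in particular, confirming that when $e_i$ itself is the maximum-weight edge on its cycle (so nothing is removed, or rather $e_i$ is immediately removed) the two runs stay synchronized and Lemma~\ref{lem:mst-property}'s disjointness hypothesis is maintained for the next step; but this is already handled by the inductive statement of Lemma~\ref{lem:mst-property}, so the argument here is essentially a clean wrap-up.
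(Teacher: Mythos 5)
Your proof is correct and follows essentially the same route as the paper: run \mstalg\ on $H$ and $H'$ in parallel, use Lemma~\ref{lem:mst-property} to argue the removed edges have matching weights at every step, and conclude that the weight gap stays $w^*$ throughout, so the extraction output $w^* + \mathrm{weight}(H'_\ell)$ equals $\mathrm{weight}(H_\ell)$. Your phrasing via the invariant $\mathrm{weight}(H_i)-\mathrm{weight}(H'_i)$ and the explicit note that $\mstalg(H,Q)$ returns the MST of $G^Q$ (not merely of $H\cup Q$) are a touch more careful than the paper's wording, but the argument is the same.
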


\begin{proof}
Assume we run $\mstalg$ on $H$ (the minimum spanning tree of $G$) and
$H'$ with query $Q$ in parallel. Recall that $H_i$ and $H'_i$ are the
resulting trees after iterating the $i$-th edge in $Q$ for $H$ and $H'$,
respectively.

When we add the $i$-th edge $e_i = (u,v)$ in $Q$ to $H_{i-1}$
(resp. $H'_{i-1}$), the removed edge $e$ (resp. $e'$) is either $e_i$
itself or the edge with the maximum weight on the path $P^{H_{i-1}}(u,
v)$ (resp. $P^{H'_{i-1}}(u,v)$).  By Lemma~\ref{lem:mst-property}, the
maximum weight of the edges in the path $P^{H'_{i-1}}(u, v)$ is equal
to the maximum weight of the edges in the path $P^{H_{i-1}}(u,v)$.  It
implies that the weight of $e$ is equal to the weight of $e'$.
Consequently, the total weight of all the removed edges is the same
for $H$ and $H'$. By adding $w^*$, which is equal to the difference
between the weights of the initial trees $H$ and $H'$, to the total weight
of $H'_{\card{Q}}$, we ensure that the returned weight is equal to the
weight of the minimum spanning tree of $G^Q$.  
\end{proof}

Theorem~\ref{thm:mst} is an immediate consequence of
Lemma~\ref{lem:mst-size} and Lemma~\ref{lem:mst-correct}.

For the case when $G$ is not connected, the only change we need to
apply is to initialize $H$ (line (1) of the
compression algorithm) with the minimum spanning forest of
$G$. Consequently, the $k$ terminals are distributed among different
trees in the forest, and by Lemma~\ref{lem:mst-size}, the number of
vertices in each tree is linear to the number of terminals inside that
tree. Hence, the size of the sketch is $O(k)$.

We still analyze the extraction algorithm using $\mstalg$. As long as a cycle is created when 
an edge from the query is added, each tree can be treated individually and the same argument
ensures correctness. If an edge $e_i$ in $Q$ connects two trees, adding
$e_i$ to either $H_{i-1}$ or $H'_{i-1}$ will not create a cycle,
thus no edge will be removed in line (\ref{line:remove}) of $\mstalg$.
Lemma~\ref{lem:mst-property} still holds and the argument in
Lemma~\ref{lem:mst-correct} can be directly applied to prove
the correctness.

\section{Limitations of Dynamic Sketching}\label{sec:limit}

In this section, we provide two simple lower bounds in the dynamic sketching model. In particular, we first show that some NP-hard problems like 
maximum clique or minimum vertex cover does not admit compact dynamic sketches and then give a similar lower bound for a problem which is in P. Together with our lower bound
for the maximum flow problem (see the discussion after Theorem~\ref{thm:equivalence}), these results demonstrates some of the limitations of applying the dynamic sketching framework to various problems.

\subsection{Lower Bounds for the Clique and Vertex Cover Problems}\label{sec:np}
We present a simple lower bound result on the size of any $k$-dynamic sketch for the maximum clique problem. Specifically, we show that any $k$-dynamic sketching scheme for outputting 
whether the input graph has a clique of size $\floor{k/2}+1$ or not requires sketches of size $2^{\Omega(k)}$.
A similar construction can also be used to establish identical lower bounds for the maximum independent set problem and the minimum vertex cover
problem, using the close connection (in terms of \emph{exact} answers) between these problems.
We emphasize that the lower bound is information-theoretic; it holds even if the compression and extraction algorithms are computationally unbounded.

The clique problem is to find a complete sub-graph (a clique) with the size $b = \floor{k/2} + 1$ in the
given input graph. 
We use a reduction from the \membership problem defined in Section~\ref{sec:matching}. 

\subparagraph{Reduction.} Let $N := {{k}\choose{\floor{k/2}}}$; for any set $S \subseteq [N]$, we define a graph $G_S$ with $N$ vertices $\set{v_1, v_2, \ldots , v_N}$ plus $k$ terminals $\set{q_1,\ldots,q_k}$. 
Let $\SS$ be the collection of all subsets of terminals $\set{q_1,\ldots,q_k}$ with size exactly $\floor{k/2}$ and fix an arbitrary bijection $\sigma: [N] \mapsto \SS$.   
For any $e \in [N]$, if $e \in S$, connect $v_e$ to all $\floor{k/2}$ terminals $q_i \in \sigma(e)$.

Alice creates the graph $G_S$ and compress it using the $k$-dynamic sketching scheme for the clique problem. She
will then send the dynamic sketch to Bob. Bob queries the sketch with
the query $Q$ whereby the $\floor{k/2}$ terminals corresponding to $\sigma(e^*)$ form a clique. 

We now argue the correctness of this reduction. 

\begin{proof} If $e^* \in S$, adding the vertex $v_{e^*}$ in $G_S$ to the clique in the query $Q$
 forms a clique on $\floor{k/2} + 1$ vertices, hence $G_S$ has a clique of size $\floor{k/2} + 1$ in this case. 
 
 If $e^* \notin S$, the maximum clique has $\floor{k/2}$ vertices as there is no edges between $v_i$'s vertices and hence any clique can contain only one of them.  

The lower bound of $\Omega(N) = 2^{\Omega(k)}$ on the sketch size now follows from this reduction and $\Omega(N)$ lower bound of the \membership problem. 
\end{proof}
\subsection{A Lower Bound for Boolean Functions in P}\label{sec:boolean}
We show that there are Boolean functions in P for which any
$k$-dynamic sketch requires $\min\set{2^k,n}$ space.  Consider a
Boolean function $f$ where the $n$-bit static part represents the
truth table of a $k$-ary Boolean function $f'$ (hence $n = 2^k$), and
the $k$-bit dynamic part represents the input to $f'$. Clearly, $f$
can be computed in poly-time. Moreover, any compressed representation
for $f$ can be used to recover $f'$ by trying all possible $2^k$
updates on the dynamic part. Since the number of different $k$-ary
Boolean functions is $2^{2^k}$ , any compressed representation for $f$
requires to store at least $2^k = n$ bits. On the other hand, for any
Boolean function $f$, there is always a trivial sketch of size
$\min\set{2^k,n}$ that allows one to compute $f$ for any possible
update to a specified set of $k$ elements -- one can either store
answers for all possible $2^k$ updates or let the entire dataset be
the sketch.

\section{Conclusions}\label{sec:conclusion}
In this paper we have introduced \emph{dynamic sketching}, a new
approach for compressing data sets separated into static and dynamic
parts. We studied dynamic sketching for graph problems where the
dynamic part consists of $k$ vertices and the edges between them
may get modified in an arbitrary manner (a query).  We showed that
the maximum matching problem admits a sketch of size $O(k^2)$ and the space bound is tight. Moreover,
this sketch can be used to obtain cut-preserving sketches of size $O(kC^2)$, and dynamic sketches for $s$-$t$ 
edge-connectivity of size $O(k^4)$.

There are problems (even in P) for which any dynamic sketch requires
$2^{\Omega(k)}$ space. An interesting direction for future work is to
identify broad classes of problems that admit compact dynamic sketches, i.e, sketches of size
$\poly(k)$.

Some data compression schemes (most notably, cut sparsifiers and
kernelization results) generate as compressed representation an
instance of the original problem, while the sketches we introduced do not fall into this category.  
A natural question is to understand if there exist
polynomial-size ``sparsifier-like'' compressed representations for
matchings and $s$-$t$ edge connectivity in the dynamic sketching
model.

Finally, while our work narrows the gap between upper and lower bounds
on the size of a cut-preserving sketches, it remains an intriguing open question to get an
asymptotically tight bound on the size of cut-preserving sketches.

\subsection*{Acknowledgments} We are grateful to Chandra Chekuri and Michael Saks for
helpful discussions.

\clearpage
\bibliographystyle{plain}
\bibliography{ref}

\end{document}